\begin{document}
\title{Scheduling two types of jobs with minimum makespan}
%
%
\author{Song Cao\orcidID{0009-0002-1760-3820} \and
Kai Jin\orcidID{0000-0003-3720-5117}}
\authorrunning{Cao and Jin}
%
\institute{Shenzhen Campus of Sun Yat-sen University, Shenzhen, Guangdong, China
\email{caos6@mail2.sysu.edu.cn}\\
\email{jink8@mail.sysu.edu.cn}\\
}
\maketitle              
\begin{abstract}
We consider scheduling two types of jobs (A-job and B-job) to $p$ machines and minimizing their makespan. A group of same type of jobs processed consecutively by a machine is called a batch. For machine $v$, processing $x$ A-jobs in a batch takes $k^A_vx^2$ time units for a given speed $k^A_v$, and processing $x$ B-jobs in a batch takes $k^B_vx^2$ time units for a given speed $k^B_v$. 
We give an $O(n^2p\log(n))$ algorithm based on dynamic programming and binary search for solving this problem,
 where $n$ denotes the maximal number of A-jobs and B-jobs to be distributed to the machines.
Our algorithm also fits the easier linear case where each batch of length $x$ of $A$-jobs takes $k^A_v x$ time units and each batch of length $x$ of $B$-jobs takes $k^B_vx$ time units. The running time is the same as the above case.

\keywords{Machine scheduling \and makespan \and Dynamic programming \and Binary search.}
\end{abstract}
\section{Introduction}

We consider scheduling two types of jobs A and B to machines and minimizing their makespan, i.e., the maximum time completing all jobs. 
Suppose $n_a$ A-jobs and $n_b$ B-jobs have to be scheduled, and can be distributed to $p$ machines. 
A group of A-jobs processed consecutively by some machine is called an \emph{A-batch},
  and a group of B-jobs processed consecutively by some machine is called a \emph{B-batch}.
    A batch refers to an A-batch or a B-batch. A machine processing a batch continuously will overheat over time, resulting in decreased performance. This leads to a quadratic relationship between processing time and the number of jobs in a batch.
For the $v$-th machine, processing $x$ A-jobs in a batch takes $k^A_vx^2$ time units, 
         and processing $x$ B-jobs in a batch takes $k^B_vx^2$ time units, 
            where $k^A_v,k^B_v$ are given speed. 
Moreover, for the $v$-th machine,
   assume that it takes $t^A_v$ extra time units to process any A-batch (the overhead to switch the status to processing A-jobs), 
  and $t^B_v$ extra time units to process any B-batch (the overhead to switch the status to processing B-jobs).
No empty batch is allowed, and two neighbouring batches of any machine must be of different types.
 Denote $n=\max(n_a,n_b)$.

This paper gives an $O(n^2p\log(n))$ time algorithm for solving the above problem based on binary search.
Given a parameter $L$, we have to determine whether there is a scheduling with makespan not exceeding $L$ time units.
  
For convenience, we use a pair $(a,b)$ to indicate the task-combination formed by $a$ A-jobs and $b$ B-jobs
  (it is always assumed that $a\geq 0$ and $b\geq 0$).  
Consider the set $B$ of $b$, for which the machines can finish the task-combination $(n_a,b)$ within $L$ time units. 
We determine that the makespan can be smaller or equal to $L$ time units if and only if $n_b\in B$.
On the other hand, we prove that $B$ is in fact an interval, which can be computed by dynamic programming that takes $O(n^2p)$ time.
This leads to our final algorithm that runs in $O(n^2p\log(n))$ time. 
The major challenge for designing the algorithm lies in proving the theorem that says $B$ is an interval, 
   for which a lot of analysis are shown in this paper.

We also show that our algorithm can solve the linear case where each $A$-batch of length $x$ takes $k^A_v x$ time units and each $B$-batch of length $x$ takes $k^B_vx$ time units. The running time is the same as the quadratic case mentioned above.

\subsection{Related works}

Machine scheduling problem is widely studied for its practical importance. Consider scheduling jobs $J_1, \ldots, J_n$ on parallel machines $M_1, \ldots, M_m$ and minimizing the makespan. Preemption(job splitting), precedence relation and release date are not considered for each job. The processing time that machine $M_i$ processes job $J_j$ is denoted as $p_{ij}$. Classified by machine environment, when $p_{ij}=p_j$, it's called identical parallel machine; when $p_{ij}=p_j/s_i$ for a given speed $s_i$ of machine $M_i$, it's called uniform parallel machine; and the general case, when $p_{ij}=p_j/s_{ij}$, it's called unrelated parallel machine. According to the three-field classification note \cite{Lawler}, the above three cases can be denoted as $P||C_{max}$, $Q||C_{max}$ and $R||C_{max}$. The problem we solve in this paper is a variation of $R||C_{max}$.

$R||C_{max}$ is NP-hard since its simple version P||Cmax has been proved NP-hard \cite{Garey}. Studies on these problems focus on finding polynomial $\rho$-approximation algorithm ($\rho>1$), that is, polynomial time algorithm which generates solution at most $\rho$ times the optimal solution\cite{Lawler}. Lenstra et al.\cite{Lenstra} proved that for $R||C_{max}$, the worst-case ratio for approximation is at least $3/2$ unless $P=NP$. And they present a polynomial $2$-approximation algorithm. Ghirardi et al.\cite{Ghirardi} provides an heuristic called Recovering Beam Search, which generates approximate solutions in $O(n^3m)$ time.

Different variations of unrelated parallel machine problem have been studied. Consider minimizing $L_p$ norm of unrelated machines' completion time, Alon et al.\cite{Alon} provides an $(1+\varepsilon)$-approximation scheme, that is, a family of algorithms $\{A_{\varepsilon}\}$ ($\varepsilon>0$) such that, each $\{A_{\varepsilon}\}$ is $1+\varepsilon$-approximation, and its running time may depend on input and $\varepsilon$\cite{Lawler}. Azar et al. \cite{Azar} provides a $2$-approximation algorithm for $p>1$, and $\sqrt{2}$-approximation for $p=2$. And Kumar et al.\cite{Kumar} improves it with a better-than-$2$ approximation algorithm for all $p\geq 1$. Im and Li\cite{Im} consider weighted completion time with job release date. They improve a $2$-approximation to $1.8786$-approximation for both preemptive and non-preemptive problems. For $R||C_{max}$, when matrix rank formed by job processing time is small, Bhaskara et al.\cite{Bhaskara} show that it admits an QPTAS for rank $2$, and is APX-hard for rank $4$. Chen et al.\cite{Chen} continue this research and prove it APX-hard for rank $3$. Deng et al.\cite{Deng} consider a more general problem called GLB problem, where the load of each machine is a norm, and minimizes generalized makespan(also a norm). They provide a polynomial $O(\log n)$-approximation algorithm. Im and Li\cite{Im} give a $1.45$-approximation algorithm for minimizing total weighted completion time, and a $\sqrt{4/3}$-approximation for minimizing $L_2$-norm. Bamas et al.\cite{Bamas} obtain a polynomial $(2-1/n^{\varepsilon})$-approximation for two-value makespan minimization, and a polynomial $(1.75+\varepsilon)$-approximation for the restricted assignment case.

The organization of this paper is as follows. Section \ref{sect:single-day} considers the one machine case and proves several important observations for this case.
 Section \ref{sect:multiple-day} considers the multiple machines case and gives our main algorithm.

\section{The one machine case}\label{sect:single-day}

Consider any fixed machine $v$ with parameters $k^A_v,k^B_v,t^A_v$ and $t^B_v$.
Denote by $f(a,b)$ the minimum time units for this machine to finish task-combination $(a,b)$,
 and denote by $S_v$ the set of task-combinations that can be finished by machine $v$ within $L$ time units 
   (where $L$ is a bound that will be specified by a binary search in the main algorithm).
  Formally, $S_v=\{(a,b) \mid f(a,b) \leq L\}$.
  
A crucial property of $S_v$ is described in the following theorem.


\begin{theorem}\label{thm:S}    
    For fixed $a\geq 0$, those $b$ for which $(a,b)\in S_v$ are consecutive.
\end{theorem}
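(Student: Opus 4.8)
The plan is to fix $a$ and prove the stronger statement that $f(a,\cdot)$ is a discretely convex function of $b$ on $\{0,1,2,\dots\}$, meaning its first differences are non-decreasing. Since the sublevel set of a discretely convex function is an interval of integers, the set $\{b : f(a,b)\le L\}$ is then consecutive, which is exactly the claim.

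First I would decompose $f(a,b)$ according to the number $m_B$ of B-batches used. Writing $S(x,m)$ for the minimum value of $\sum_i x_i^2$ over partitions of $x$ into $m$ positive integers (attained by the most balanced partition), the cost contributed by the A-jobs when split into $m_A$ batches is $F(a,m_A)=k^A_v S(a,m_A)+m_A t^A_v$, and the cost contributed by the B-jobs split into $m_B$ batches is $G(b,m_B)=k^B_v S(b,m_B)+m_B t^B_v$. The alternation rule forces $|m_A-m_B|\le 1$, so I would absorb the A-side into a single function of $m_B$, namely $\Phi_a(m_B)=\min\{F(a,m_A): |m_A-m_B|\le 1,\ m_A\text{ valid}\}$, giving the clean form
\begin{equation}
f(a,b)=\min_{m_B}\bigl[\Phi_a(m_B)+G(b,m_B)\bigr].
\end{equation}
The boundary cases $a=0$ or $b=0$, where the number of batches of one type is forced to be $0$ or $1$, would be checked separately.

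The heart of the argument is two convexity facts. First, $S(a,m)$ is convex in $m$ and $S(b,m)$ is convex in $b$ (their first differences are non-decreasing, directly from the balanced-partition formula), so $F(a,\cdot)$ is convex, and hence $\Phi_a$, being the min-convolution of $F(a,\cdot)$ with the window indicator of $\{-1,0,1\}$, is convex in $m_B$. Second, and crucially, $S(b,m)$ is \emph{jointly} discretely convex in $(b,m)$: it is the integer analogue of the quadratic-over-linear function $b^2/m$, which is jointly convex. Granting these, $\Phi_a(m_B)+G(b,m_B)$ is jointly discretely convex in $(b,m_B)$, and partial minimization over the integer variable $m_B$ yields a discretely convex function of $b$, which is $f(a,b)$. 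Concretely I would verify the discrete midpoint inequality $2f(a,b)\le f(a,b-1)+f(a,b+1)$ by taking optimal batch counts $m^-,m^+$ for $b-1,b+1$ and exhibiting batch counts for two copies of $b$ obtained by pulling $m^-$ and $m^+$ toward each other at fixed sum, bounding the $\Phi_a$ terms by its convexity and the $G$ terms by the joint convexity of $S$.

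The main obstacle is exactly this joint discrete convexity of $S(b,m)$ together with the proof that partial minimization over $m_B$ preserves convexity in the integer setting: unlike the continuous $b^2/m$, one must argue with balanced integer partitions and track the parity of $b\pm1$, and the coupling $|m_A-m_B|\le1$ means the minimizing $m_A$ can jump when $m_B$ changes. I expect the bulk of the work, matching the ``a lot of analysis'' promised in the introduction, to be a careful exchange argument showing that for balanced partitions the required pairing of batch counts always exists, together with the separate verification of the $a=0$, $b=0$, and small-$b$ boundary cases.
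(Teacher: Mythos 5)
Your plan hinges on a stronger statement---that for fixed $a$ the function $f(a,\cdot)$ has non-decreasing first differences in $b$---and that statement is false, so the argument cannot be completed along this route. The batch setup cost breaks convexity at the start of the range. Simplest case: $a=0$. Then only one B-batch is ever possible (there are no A-jobs to separate two B-batches), so $f(0,0)=0$ and $f(0,b)=t^B_v+k^B_vb^2$ for $b\geq 1$; the first differences are $t^B_v+k^B_v$ and then $3k^B_v$, which are \emph{decreasing} whenever $t^B_v>2k^B_v$, violating $2f(0,1)\leq f(0,0)+f(0,2)$. The same failure occurs for every $a\geq 1$: take $t^A_v$ huge (so a single A-batch is always optimal) and $t^B_v>2k^B_v$; then $f(a,0)=t^A_v+k^A_va^2$, $f(a,1)=f(a,0)+t^B_v+k^B_v$, $f(a,2)=f(a,0)+t^B_v+4k^B_v$, and convexity fails at $b=1$. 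The root cause is exactly the no-empty-batch rule: $b=0$ permits zero B-batches while $b\geq1$ forces at least one, incurring the setup cost $t^B_v$, so the effective domain of $(b,m_B)$ is not convex across this boundary and the quadratic-over-linear intuition does not apply. This is not a ``boundary case to be checked separately'': the property you propose to prove is simply wrong there, while the interval property of the sublevel set still holds for a different reason (what is true is quasi-convexity, not convexity). Note also that $f(a,\cdot)$ need not even be monotone (e.g.\ $f(2,1)<f(2,0)$ when $k^A_v$ is large and $t^A_v,t^B_v,k^B_v$ are small), so the non-convex initial segment cannot be absorbed by a monotonicity argument either. Your own caveat---that naive discrete midpoint convexity is not preserved under partial minimization over an integer variable---is also a real obstruction, but it is moot given the counterexample.

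The paper's proof avoids convexity in $b$ entirely. For fixed $a$ it forms the Boolean table $M(b,s)=[cost(a,b,s)\leq L]$ indexed by $b$ and the number $s\leq\min(b,a+1)$ of B-batches, and combines two facts: for fixed $s$, $cost(a,b,s)$ is increasing in $b$, so the 1's in each column are consecutive and reach down to the diagonal $b=s$; and along the diagonal, $cost(a,s,s)$ is a convex function of $s$ (this is where your first convexity fact---convexity of the batch costs in the \emph{batch count}---is genuinely used), so the diagonal 1's are consecutive. Together these force the rows containing a 1 to be consecutive. In short, the paper uses convexity only in the variable $s$, where it holds, and replaces convexity in $b$---where it fails---by monotonicity at fixed $s$ plus the diagonal argument. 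To salvage your approach you would need to prove quasi-convexity of $f(a,\cdot)$ directly, and any such proof will have to handle the $s=b$ boundary in essentially the way the paper's Fact~2 does.
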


To prove this theorem, we need some knowledge about $f(a,b)$ as shown below.

\begin{definition}
For $a\geq 0$ and $s\geq 0$, define $cost^A(a,s)$ to be
  the minimum time units for machine $v$ to process $a$ A-jobs in $s$ batches
  (here, we allow empty batch length and hence $cost^A(a,s)$ is well-defined even for the case $a<s$).
  Formally, $$cost^A(a,s) = min_{\begin{subarray}{c}x_1+\ldots+x_s=a \\ x_1\geq 0,\ldots,x_s\geq 0\end{subarray}}
    \{s\cdot t^A_v + (x_1^2+\ldots +x_s^2)\cdot k^A_v\}.$$

  Note that for $s=0$, we have $cost^A(0,0)=0$ and $cost^A(a,0)=\infty$ for $a>0$.
      
\smallskip Similarly, define
    $$cost^B(b,s) = min_{\begin{subarray}{c}x_1+\ldots+x_s=b \\ x_1\geq 0,\ldots,x_s\geq 0\end{subarray}}
         \{s\cdot t^B_v + (x_1^2+\ldots +x_s^2)\cdot k^B_v\}.$$

\end{definition}

For $a\geq 0$ and $s>0$, the following equation holds (trivial proof omitted).
\begin{equation}\label{eqn:trivial}
  cost^A(a,s)=s\cdot t^A_v+\left[(a\bmod s)(\left[\frac{a}{s}\right]+1)^2+(s-(a\bmod s))\left[\frac{a}{s}\right]^2\right]\cdot k^A_v.
\end{equation}

According to (\ref{eqn:trivial}), for fixed $s\geq 1$, 
function $cost^A(a,s)$ of variable $a$ increases on $[0,\infty)$. Similarly, function $cost^B(b,s)$ of variable $b$ increases on $[0,\infty)$.

\begin{lemma}\label{lemma:f-mcost}
 $f(a,b)=\mathop{\min}_{0\leq s\leq \min(b,a+1)}cost(a,b,s)$, where 
 \begin{equation}\label{eqn:def_costabs}
   cost(a,b,s)=cost^B(b,s)+mcost^A(a,s),
 \end{equation}
 and
 \begin{equation}
    mcost^A(a,s)=\min_{s'\in \{s-1,s,s+1\},s'\geq 0}cost^A(a,s').
 \end{equation}
\end{lemma}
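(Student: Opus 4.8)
The plan is to decompose any feasible one-machine schedule into its A-part and B-part and to show that, once the batch counts are fixed, the two parts optimize independently. Consider a valid schedule for $(a,b)$. Since the machine processes its batches one after another, the completion time is exactly the sum of the processing times of all batches, and this sum splits as (total time of the A-batches) $+$ (total time of the B-batches), with the ordering of the batches having no effect on it. Let $s$ be the number of B-batches and $s'$ the number of A-batches. Because consecutive batches must differ in type, the two kinds alternate, which forces $s'\in\{s-1,s,s+1\}$. For a fixed pair $(s,s')$, the minimum total time of the A-batches over all ways of splitting the $a$ A-jobs into $s'$ batches is precisely $cost^A(a,s')$, and likewise the B-part contributes $cost^B(b,s)$; the two minimizations are independent. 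Hence $f(a,b)=\min_{(s,s')}[\,cost^A(a,s')+cost^B(b,s)\,]$, where $(s,s')$ ranges over all pairs realizable by a valid schedule.

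Next I would reorganize this minimum by first fixing the number $s$ of B-batches and then optimizing over $s'\in\{s-1,s,s+1\}$, which produces exactly $mcost^A(a,s)$ and hence the expression $cost(a,b,s)=cost^B(b,s)+mcost^A(a,s)$ of (\ref{eqn:def_costabs}). It then remains to pin down the range of $s$. A valid schedule needs $s$ non-empty B-batches, forcing $b\ge s$ when $s\ge 1$, i.e.\ $s\le b$; it also needs at least as many A-jobs as the smallest admissible A-batch count $s'=s-1$, forcing $a\ge s-1$, i.e.\ $s\le a+1$. Together these give the stated range $0\le s\le\min(b,a+1)$, and one checks directly that for every $s$ in this range an alternating arrangement of $s$ non-empty B-batches and a feasible number of non-empty A-batches exists. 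The degenerate cases $s=0$, $a=0$, or $b=0$ are handled via the $\infty$ conventions of the definition (a side with zero batches forcing the other side to have at most one batch), so they contribute a finite value only when genuinely realizable.

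The main obstacle, and the reason the statement is not mere bookkeeping, is the discrepancy between feasibility and the definition of $mcost^A$: the function $cost^A(a,s')$ permits empty batches, whereas a valid schedule forbids them, so I must argue that allowing empty A-batches inside $mcost^A(a,s)$ never lowers the minimum below the value attainable by genuine schedules. The key fact is that for arguments $s'\ge a$ the optimal split places at most one job per batch, so by (\ref{eqn:trivial}) one has $cost^A(a,s')=s'\,t^A_v+a\,k^A_v$, which is nondecreasing in $s'$ on this range. Within the range $s\le a+1$ the feasible index $s'=s-1$ satisfies $s-1\le a$ and is always present (for $s\ge 1$), while any index $s'>a$ occurring in $\{s-1,s,s+1\}$ can only be $s'\in\{s,s+1\}$ and yields a value at least $cost^A(a,a)\le cost^A(a,s-1)$-dominating choice, hence cannot strictly beat the feasible one. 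Consequently the unconstrained $mcost^A(a,s)$ coincides with the minimum over genuinely feasible A-batch counts, so the reorganized minimum equals $f(a,b)$, completing the argument.
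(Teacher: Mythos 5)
Your overall route is the same as the paper's: decompose any schedule into its A-part and B-part (the batch order is irrelevant to the total time), use the alternation rule to force the A-batch count into $\{s-1,s,s+1\}$, derive the range $0\leq s\leq\min(b,a+1)$, and then resolve the only genuine difficulty---that $mcost^A$ permits empty batches while schedules do not---by exploiting the explicit form $cost^A(a,s')=s'\,t^A_v+a\,k^A_v$ for $s'\geq a$. The paper does exactly this, splitting into the cases $s<a$, $s=a$, $s=a+1$ and using $cost^A(a,a+1)>cost^A(a,a)$ and $cost^A(a,a+2)>cost^A(a,a)$.

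However, the final step of your third paragraph is wrong as written. You dismiss the infeasible indices $s'>a$ by comparing them against the guaranteed-feasible index $s'=s-1$, via the claim $cost^A(a,a)\leq cost^A(a,s-1)$. That inequality is false in general: when $s=a$ one has $s-1=a-1$, and by (\ref{eqn:trivial}), $cost^A(a,a)-cost^A(a,a-1)=t^A_v-2k^A_v$, which is strictly positive whenever $t^A_v>2k^A_v$ (e.g.\ $a=4$, $t^A_v=100$, $k^A_v=1$ gives $cost^A(4,4)=404>402=cost^A(4,3)$). Fortunately the inequality is also unnecessary: an index $s'>a$ can occur in $\{s-1,s,s+1\}$ only when $s\in\{a,a+1\}$ (because $s-1\leq a$ throughout your range), and in both of these cases $a$ itself lies in $\{s-1,s\}$ and is therefore a \emph{feasible} index. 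Since your monotonicity fact gives $cost^A(a,s')\geq cost^A(a,a)$ for every $s'\geq a$, no infeasible index can beat the feasible minimum, which is at most $cost^A(a,a)$. Replacing the comparison point $cost^A(a,s-1)$ by $cost^A(a,a)$ repairs the argument and makes it coincide with the paper's case analysis.
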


Lemma~\ref{lemma:f-mcost} is not that obvious --
An empty batch is allowed in the definition of $cost^A$ and $cost^B$,
  yet \textbf{not} allowed in processing the task-combination $(a,b)$.

\begin{proof}
  Suppose we want to process the task combination $(a,b)$. Without loss of generality,
    assume that we use $s$ amount of $B$-batches. Obviously,
  (1) $s\leq b$ (otherwise some $B$-batches would be empty, which is not allowed).
  (2) $s\leq a+1$ (since the number of $A$-batches is at least $s-1$ and at most $a$).
  Together, $$0\leq s\leq \min(b,a+1).$$
  
  Under the assumption that there are $s$ amount of $B$-batches and $s\leq b$,
    we know that all the $B$-jobs can be done in $cost^B(b,s)$ time units.
  It remains to show that the minimum time units for completing $A$-jobs is $mcost^A(a,s)$ time
    under the constraint that the number of $A$-batches belongs to $\{s-1,s,s+1\} \cap [0,a]$.

  Plugging in the formula of $mcost^A(a,s)$, it reduces to proving that 
     \begin{equation}\label{eqn:proof-mcost}
        \min_{s'\in \{s-1,s,s+1\}\cap [0,+\infty)}cost^A(a,s') = \min_{s'\in \{s-1,s,s+1\}\cap [0,a]}cost^A(a,s').
     \end{equation}
  
  For $s<a$, (\ref{eqn:proof-mcost}) holds as $\{s-1,s,s+1\}\cap [0,+\infty)=\{s-1,s,s+1\}\cap [0,a]$.
  
  For $s=a$, observing that $cost^A(a,a+1)>cost^A(a,a)$ (due to the definition of $cost^A$),
    the left and right sides of (\ref{eqn:proof-mcost}) both equal $\min_{s'\in \{a-1,a\}\cap [0,a]}cost^A(a,s')$.
    
  For $s=a+1$, observing that $cost^A(a,a+2)>cost^A(a,a)$ and $cost^A(a,a+1)>cost^A(a,a)$,
    the two sides of (\ref{eqn:proof-mcost}) are equal to $\min_{s'\in \{a\}}cost^A(a,s')$.  
\qed
\end{proof}

\begin{lemma}\label{lemma:cost^A-diff}
For $a \geq 0$, all the functions below are increasing functions of $s$:
\begin{enumerate}
\item $cost^A(a,s+1)-cost^A(a,s)$ (namely, the difference function of $cost^A(a,s)$);
\item $cost^B(a,s+1)-cost^B(a,s)$ (namely, the difference function of $cost^B(a,s)$);
\item $mcost^A(a,s+1)-mcost^A(a,s)$ (namely, the difference function of $mcost^A(a,s)$).
\end{enumerate}
In other words, $cost^A(a,s),cost^B(a,s),mcost^A(a,s)$ are convex functions of $s$.
\end{lemma}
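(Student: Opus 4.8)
The plan is to handle the three items by first reducing (1) and (2) to a single convexity statement, then bootstrapping (3) from it. Since $cost^A$ and $cost^B$ have identical structure, it suffices to treat $cost^A$. Writing $q=\lfloor a/s\rfloor$ and $r=a\bmod s$, equation~(\ref{eqn:trivial}) says $cost^A(a,s)=s\,t^A_v+k^A_v\cdot Q(a,s)$, where $Q(a,s):=r(q+1)^2+(s-r)q^2$ is the minimum possible value of $x_1^2+\cdots+x_s^2$ over nonnegative integer parts summing to $a$. The term $s\,t^A_v$ is affine in $s$ and $k^A_v\ge 0$, so $cost^A(a,\cdot)$ is convex as soon as $Q(a,\cdot)$ is convex; the same reduction gives (2). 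Thus items (1) and (2) both follow from the single claim that $Q(a,s)$ is a convex function of $s$.

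To prove $Q(a,\cdot)$ convex I would avoid a direct second-difference computation (which is unpleasant because $q$ and $r$ jump as $s$ varies) and instead expose a max-of-affine structure. Expanding the definition gives the closed form $Q(a,s)=a(2q+1)-s\,q(q+1)$ with $q=\lfloor a/s\rfloor$. For each fixed integer $q\ge 0$ define the affine function $\ell_q(s):=a(2q+1)-s\,q(q+1)$. The key step is to prove $Q(a,s)=\max_{q\ge 0}\ell_q(s)$ for every $s$: one checks $\ell_{q+1}(s)-\ell_q(s)=2\bigl(a-s(q+1)\bigr)$, which is nonnegative exactly while $q+1\le a/s$, so the maximum over $q$ is attained precisely at $q=\lfloor a/s\rfloor$, giving back $Q(a,s)$. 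A pointwise maximum of affine functions of $s$ is convex, so $Q(a,\cdot)$ is convex over the reals, and in particular its restriction to integers satisfies $Q(a,s-1)+Q(a,s+1)\ge 2Q(a,s)$. I expect recognizing this max-of-affine representation to be the main obstacle: it is the clever step that replaces the floor-laden casework one would otherwise face.

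For item (3), recall $mcost^A(a,s)=\min_{s'\in\{s-1,s,s+1\},\,s'\ge 0}cost^A(a,s')$ is a sliding-window minimum of the already-established convex function $h:=cost^A(a,\cdot)$. The plan is to show this windowed minimum is again convex by tracking its forward differences. Since $h$ is convex, its difference $d(s):=h(s+1)-h(s)$ is nondecreasing, so $h$ is unimodal with some minimizer $s^\ast$. Identifying where each window attains its minimum—at $s+1$ when the whole window lies left of $s^\ast$, at $s-1$ when it lies right of $s^\ast$, and at the value $h(s^\ast)$ when $s^\ast$ falls inside the window—shows that the forward-difference sequence of $mcost^A(a,\cdot)$ is exactly the sequence $d(\cdot)$ with its sign-change region overwritten by zeros: nonpositive values of $d$, then a block of zeros around $s^\ast$, then nonnegative values of $d$. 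This clipped sequence is still nondecreasing, which is precisely convexity of $mcost^A(a,\cdot)$.

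The remaining care is bookkeeping at the edges of this argument: when $h$ attains its minimum on a flat stretch (several consecutive minimizers, where $d=0$ on a range) the same clipping description holds with $s^\ast$ taken as the leftmost minimizer, and near the boundary $s=0$ the constraint $s'\ge 0$ together with the convention $cost^A(a,0)=\infty$ (for $a>0$) only removes the never-optimal endpoint $s'=0$, leaving the difference analysis intact. These are routine once the max-of-affine identity and the window-min difference computation are in place, so I regard those two computations as the heart of the proof.
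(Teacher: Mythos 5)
Your proof is correct, and for parts 1--2 it takes a genuinely different route from the paper. The paper proves convexity of $cost^A(a,\cdot)$ head-on: it writes $a=sq+r$, $a=(s+1)q'+r'$, $a=(s+2)q''+r''$ and compares the consecutive differences $\Delta_s,\Delta_{s+1}$ in four separate cases ($q=q'=q''$, $q>q'=q''$, $q=q'>q''$, $q>q'>q''$), each case resolved by locating points relative to the symmetry axis of a quadratic, and in the hardest case of a hyperbolic paraboloid $h(x_1,x_2)$, with accompanying figures. Your max-of-affine identity $Q(a,s)=\max_{q\ge 0}\bigl(a(2q+1)-s\,q(q+1)\bigr)$ --- verified by the one-line computation $\ell_{q+1}(s)-\ell_q(s)=2\bigl(a-s(q+1)\bigr)$, which pins the maximizing index at exactly $q=\lfloor a/s\rfloor$ --- replaces all of that casework: convexity is immediate because a pointwise maximum of affine functions is convex, and you even get convexity over real $s>0$, slightly more than the integer statement needed. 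What the paper's approach buys is that it needs no insight beyond patience (elementary second-difference bookkeeping); what yours buys is brevity, no figures, and a structural explanation of \emph{why} the convexity holds. For part 3 the two arguments are essentially the same idea --- a radius-one sliding-window minimum of a convex (hence unimodal) function is again convex: the paper splits into the monotone case and the interior-minimizer case and observes that the three window minima around the minimizer $s_0$ coincide (so the junction differences vanish), while you track the clipped forward-difference sequence explicitly; your version is somewhat more careful about flat minima and about the $s'=0$/infinite-value boundary, which the paper passes over in silence.
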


Due to space limit, the proof of Lemma~\ref{lemma:cost^A-diff} is deferred to appendix.

\begin{lemma}\label{cost-lemma}
 \begin{enumerate}
   \item For fixed $a \geq 0$ and $b \geq 0$, function $cost(a,b,s+1)-cost(a,b,s)$ of $s$ (namely, the difference function of $cost(a,b,s)$)
       is increasing.
   \item For fixed $s\geq 0$ and $a \geq 0$, function $cost(a,b,s)$ of $b$ is increasing.
   \item For fixed $s\geq 0$ and $b \geq 0$, function $cost(a,b,s)$ of $a$ is increasing.
   \item Function $cost(x,x+1,x+1)$ of $x$ is increasing.
 \end{enumerate}
\end{lemma}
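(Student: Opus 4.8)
The plan is to treat the four parts separately, exploiting the decomposition $cost(a,b,s)=cost^B(b,s)+mcost^A(a,s)$ from (\ref{eqn:def_costabs}). The first three parts reduce almost immediately to facts already established, and only Part~4 calls for a short direct computation. For Part~1 I would write the difference function as
\[
cost(a,b,s+1)-cost(a,b,s)=\bigl[cost^B(b,s+1)-cost^B(b,s)\bigr]+\bigl[mcost^A(a,s+1)-mcost^A(a,s)\bigr].
\]
By Lemma~\ref{lemma:cost^A-diff} (items~2 and~3) both bracketed terms are increasing in $s$, and a sum of increasing functions is increasing; equivalently, $cost(a,b,s)$ is a sum of two functions convex in $s$, hence convex.

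For Part~2, observe that $mcost^A(a,s)$ does not depend on $b$, so it suffices that $cost^B(b,s)$ be increasing in $b$ for each fixed $s$. This is exactly the monotonicity stated just after (\ref{eqn:trivial}) for $s\geq 1$, and it holds trivially for $s=0$ since $cost^B(\cdot,0)$ merely jumps from $0$ (at $b=0$) to $\infty$. For Part~3, symmetrically, $cost^B(b,s)$ does not depend on $a$, so I only need $mcost^A(a,s)$ increasing in $a$. Since $mcost^A(a,s)=\min_{s'\in\{s-1,s,s+1\}\cap[0,\infty)}cost^A(a,s')$ is a minimum over a family whose index set depends on $s$ but not on $a$, and each $cost^A(a,s')$ is increasing in $a$ (for $s'\geq 1$ by the remark after (\ref{eqn:trivial}), and trivially for $s'=0$), the claim follows from the elementary fact that the pointwise minimum of increasing functions is increasing.

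Part~4 is the only part needing genuine calculation, though it remains short. The plan is to evaluate both terms of $cost(x,x+1,x+1)=cost^B(x+1,x+1)+mcost^A(x,x+1)$ in closed form. Using (\ref{eqn:trivial}) with $b=s=x+1$ (the jobs split into $x+1$ singleton batches) gives $cost^B(x+1,x+1)=(x+1)(t^B_v+k^B_v)$. For the second term, $mcost^A(x,x+1)=\min_{s'\in\{x,x+1,x+2\}}cost^A(x,s')$; here each $cost^A(x,s')$ places $x$ jobs into $s'\geq x$ batches, so the minimal sum of squares equals $x$ in every case and $cost^A(x,s')=s'\,t^A_v+x\,k^A_v$, which (since $t^A_v\geq 0$) is minimized at $s'=x$, yielding $mcost^A(x,x+1)=x(t^A_v+k^A_v)$. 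Hence $cost(x,x+1,x+1)=(x+1)(t^B_v+k^B_v)+x(t^A_v+k^A_v)$, an affine function of $x$ with nonnegative slope, which is therefore increasing.

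I do not expect a serious obstacle. Parts~1--3 are direct corollaries of Lemma~\ref{lemma:cost^A-diff} and the monotonicity noted after (\ref{eqn:trivial}), together with the closure of ``increasing'' under sums and pointwise minima. The only place demanding care is Part~4, where the step to verify is that the minimum defining $mcost^A(x,x+1)$ is attained at $s'=x$ --- that is, appending empty $A$-batches cannot help, because each additional batch incurs an overhead $t^A_v\geq 0$ while leaving the optimal sum of squares unchanged. Once that is pinned down, the explicit affine form makes monotonicity transparent.
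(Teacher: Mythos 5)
Your proposal is correct and follows essentially the same route as the paper: the same decomposition $cost(a,b,s)=cost^B(b,s)+mcost^A(a,s)$, with Parts 1--3 reduced to Lemma~\ref{lemma:cost^A-diff} and the monotonicity remark after (\ref{eqn:trivial}), and Part~4 computed in the same closed form $(x+1)(t^B_v+k^B_v)+x(t^A_v+k^A_v)$. The only differences are cosmetic: you spell out the $s=0$ edge cases (which the paper leaves as an exercise) and justify $mcost^A(x,x+1)=cost^A(x,x)$ explicitly rather than asserting it.
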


\begin{proof}
1. Recall $cost(a,b,s)=cost^B(b,s)+mcost^A(a,s)$. 
 The claim on $cost(a,b,s)$ follows from the claims on $cost^B$ and $mcost^A$ given in lemma \ref{lemma:cost^A-diff}.

2. This claim follows from the fact that the function $cost^B(b,s)$ of variable $b$ increases on $[0,\infty)$ (which is stated right below (\ref{eqn:trivial})).

3. We need to distinguish two cases by whether $s=0$ or $s>0$. First, suppose $s>0$. It reduces to showing that $mcost^A(a,s)$ is increasing as $a$ grows.
    Recall $mcost^A(a,s)= \min\left(cost^A(a,s-1),cost^A(a,s),cost^A(a,s+1)\right)$.
    This function is increasing 
      because of the fact that $cost^A(a,s-1),cost^A(a,s),cost^A(a,s+1)$ are all increasing functions of $a$ (as stated right below (\ref{eqn:trivial})).
      The proof for the case $s=0$ is similar and easier; we leave it as an exercise for the reader.

4.  Notice that
    $cost(x,x+1,x+1) = cost^B(x+1,x+1)+mcost^A(x,x+1) 
         = cost^B(x+1,x+1)+cost^A(x,x) = (x+1)(k^B_v+t^B_v)+x(k^A_v+t^A_v).$
 \qed
\end{proof}

We are ready to prove Theorem~\ref{thm:S}. Suppose $a\geq 0$ is fixed in the following. 

We introduce a table $M$ with $n_b+1$ rows and $a+2$ columns to prove Theorem~\ref{thm:S}:
for $(b,s)$ where $0\leq b\leq n_b$, and $0\leq s\leq a+1$, define
\begin{equation}
  M(b,s)=
  \left\{
  \begin{array}{cc}
    [cost(a,b,s)\leq L],  & s\leq b;\\
    -~(\text{ leave it as undefined}), & s>b,
  \end{array}\right.
\end{equation}
where $[\cdot]$ denotes the Iverson bracket. See table \ref{table1-example} for an example.
Rows in $M$ are indexed with $0$ to $n_b$. Columns in $M$ are indexed with $0$ to $a+1$.

\begin{table}
  \centering
  \renewcommand\arraystretch{1.5}
  \setlength{\tabcolsep}{5mm} {
  \begin{tabular}{|c|c|c|c|c|}
\hline
\diagbox[width=0.12\textwidth,height=9mm]{$b$}{$s$} & 0 & 1 & 2 & 3 \\
\hline
0 & \textbf{1} & - & - & - \\
\hline
1 & \textbf{1} & \textbf{1} & - & - \\
\hline
2 & 0 & 0 & \textbf{1} & - \\
\hline
3 & 0 & 0 & \textbf{1} & 0 \\
\hline
4 & 0 & 0 & 0 & 0 \\
\hline
\end{tabular}                 }
  \caption{An example of $M$, where $n_b=4$ and $a=2$.}\label{table1-example}
\end{table}

\begin{proof}[of Theorem~\ref{thm:S}]
Assume $a\geq 0$ is fixed.
Recall that this theorem states that $\{b \mid f(a,b)\leq L\}$ are consecutive.
Following Lemma~\ref{lemma:f-mcost}, $f(a,b)\leq L$ if and only if the $b$-th row in $M$ has a $1$.
Hence it reduces to showing that (i) the rows in $M$ with at least one $1$ are consecutive.

We now point out two facts about table $M(b,s)$.
\begin{enumerate}
  \item[Fact~1.] If $M(b,s)=1$ and $s\leq b-1$, then $M(b-1,s)=1$.
  \begin{proof}
    Since $M(b,s)=1$, we know $cost(a,b,s)\leq L$. By lemma \ref{cost-lemma}.2 we know $cost(a,b-1,s)\leq cost(a,b,s)\leq L$. So $M(b-1,s)=1$. \qed
  \end{proof}
  \item[Fact~2.] If $M(s,s)=1$ and $M(s+1,s+1)=0$, then $M(s',s')=0$ for $s'>s+1$.
  \begin{proof}
    Since $M(s,s)=1$ and $M(s+1,s+1)=0$, we know $cost(a,s,s)\leq L<cost(a,s+1,s+1)$. So,
    \begin{equation}\label{cost(a,s,s)<cost(a,s+1,s+1)}
      cost(a,s,s)<cost(a,s+1,s+1).
    \end{equation}
    
    Observe that $cost(a,s',s')=cost^B(s',s')+mcost^A(a,s')=s't_v^B+s'k^B_v+mcost^A(a,s')$. 
        Following lemma \ref{lemma:cost^A-diff}, function $cost(a,s',s')$ of $s'$ is convex. 
    Further since (\ref{cost(a,s,s)<cost(a,s+1,s+1)}), $cost(a,s',s')$ is strictly increasing when $s'\geq s+1$. 
    Therefore $cost(a,s',s')>cost(a,s+1,s+1)>L$, which means $M(s',s')=0$.  \qed
  \end{proof}
\end{enumerate}

According to Fact~2, the 1's in the diagonal of $M$ are consecutive.
According to Fact~1, the 1's in any particular column are consecutive and the lowest 1 (if any) in that column always appears at the diagonal of $M$.
Together, we obtain argument (i). \qed
\end{proof}

\subsection{More properties of $S_v$}


For $0\leq a\leq n_a,0\leq b\leq n_b$, define $F^{(v)}(a,b)=[(a,b)\in S_v]=[f(a,b)\leq L]$.
Abbreviate $F^{(v)}$ as $F$ when $v$ is clear.  
This table represents $S_v$. See table \ref{table2-example} for an illustration.
Whereas Theorem~\ref{thm:S} shows the consecutiveness of 1's within any row of $F$,
  the next theorem shows properties of $F$ between its consecutive rows.

\begin{theorem}\label{thm:S-more}
For any fixed $a\geq 0$,
\begin{enumerate}
    \item If some row of $F$ are all 0's, so is the next row of $F$.
    \item If both row $a$ and row $a+1$ contain 1, at least one of the following holds:\\
      (1) there is $b$ such that $F(a,b)=F(a+1,b)=1$.\\
      (2) $F(a,a+1)=F(a+1,a+2)=1$.
  \end{enumerate}
\end{theorem}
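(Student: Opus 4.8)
The plan is to derive both parts from the column/diagonal structure of the table $M$ used in the proof of Theorem~\ref{thm:S}, avoiding direct schedule surgery. Write $[l_a,r_a]$ for the (possibly empty) block of columns $b$ with $F(a,b)=1$; by Theorem~\ref{thm:S} this is an interval. The first thing I would record is that its left end lies on the diagonal of $M$, namely $l_a=\min\{s\ge 0 : cost(a,s,s)\le L\}$. This follows from Facts~1 and~2 in the proof of Theorem~\ref{thm:S}: a column $s$ carries a $1$ only when its diagonal entry $M(s,s)=1$, and then its $1$'s begin at row $b=s$, so the topmost occupied row is the smallest diagonal $1$. Since $cost(a,s,s)=s(t^B_v+k^B_v)+mcost^A(a,s)$ and $mcost^A$ is nondecreasing in its first argument (Lemma~\ref{cost-lemma}.3), raising $a$ to $a+1$ can only delete feasible diagonal positions; hence $l_{a+1}\ge l_a$ whenever both rows are nonempty.

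The device that drives everything is an ``extra-column'' observation. In row $a$ the batch count ranges over $0\le s\le\min(b,a+1)$, whereas in row $a+1$ it ranges over $0\le s\le\min(b,a+2)$, the only new value being $s=a+2$. Because $cost(a+1,b,s)\ge cost(a,b,s)$ at every shared $s$ (Lemma~\ref{cost-lemma}.3), the sole way to have $F(a+1,b)=1$ while $F(a,b)=0$ is for row $a+1$'s feasible witness to be $s=a+2$, which forces $b\ge a+2$. I expect justifying this observation, hand in hand with the diagonal description of $l_a$, to be the real obstacle; the rest is monotonicity bookkeeping, and it is precisely this observation that makes the specific cells $(a,a+1)$ and $(a+1,a+2)$ of alternative~(2) surface.

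For part~1 I would argue the contrapositive: assuming row $a+1$ has a $1$ at some column $b$, I show row $a$ is nonempty. If $F(a,b)=1$ we are done. Otherwise the extra-column observation gives $b\ge a+2$ and $cost(a+1,b,a+2)\le L$; pushing the middle argument down to $a+2$ by monotonicity in $b$ (Lemma~\ref{cost-lemma}.2) yields $cost(a+1,a+2,a+2)\le L$, and then Lemma~\ref{cost-lemma}.4 (strict monotonicity of $cost(x,x+1,x+1)$) gives $cost(a,a+1,a+1)<cost(a+1,a+2,a+2)\le L$, so $F(a,a+1)=1$.

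For part~2 I would, with both rows nonempty, split on whether the two blocks overlap. As $l_{a+1}\ge l_a$, overlap is equivalent to $l_{a+1}\le r_a$, in which case $b=l_{a+1}$ lies in both blocks and establishes alternative~(1). In the disjoint case $l_{a+1}>r_a$, every $1$ of row $a+1$ sits at a column $b\ge l_{a+1}>r_a$ where $F(a,b)=0$, so the extra-column observation forces $l_{a+1}\ge a+2$ and $cost(a+1,l_{a+1},a+2)\le L$. Pushing down to the diagonal (Lemma~\ref{cost-lemma}.2) gives $cost(a+1,a+2,a+2)\le L$, hence $l_{a+1}\le a+2$ and therefore $l_{a+1}=a+2$, i.e.\ $F(a+1,a+2)=1$; a final application of Lemma~\ref{cost-lemma}.4 gives $cost(a,a+1,a+1)<cost(a+1,a+2,a+2)\le L$, i.e.\ $F(a,a+1)=1$, which is alternative~(2).
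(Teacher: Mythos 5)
Your proof is correct, but it is organized differently from the paper's. The paper first establishes a standalone four-part lemma about the table $F$ (Lemma~\ref{table-F-property}), structured around the partition into Area~1 ($b\leq a+1$) and Area~2 ($b\geq a+1$); Theorem~\ref{thm:S-more} then follows by casing on whether row $a+1$ has a $1$ at some column $b\leq a+1$. You instead stay inside the machinery of Theorem~\ref{thm:S}: the diagonal characterization $l_a=\min\{s : cost(a,s,s)\leq L\}$, the monotonicity $l_{a+1}\geq l_a$, and the ``extra-column'' observation, followed by an overlap/disjoint split on the two row intervals. The underlying analytic content is the same in both proofs --- everything reduces to Lemma~\ref{cost-lemma}.2, \ref{cost-lemma}.3 and \ref{cost-lemma}.4 --- but the witness-exclusion step is done differently: your extra-column observation rules out shared batch counts via monotonicity in $a$ (Lemma~\ref{cost-lemma}.3, starting from $F(a,b)=0$), whereas the paper's Lemma~\ref{table-F-property}.4 rules them out via monotonicity in $b$ (Lemma~\ref{cost-lemma}.2, starting from $F(a+1,a+1)=0$). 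Your single observation thus subsumes both the paper's Area-1 lemma and the core of its Lemma~\ref{table-F-property}.4, which makes the mechanism more transparent (the only new capability of row $a+1$ is the one extra batch count $s=a+2$) and lets parts~1 and~2 share the same engine; the paper's route avoids needing the diagonal description of the left endpoints and works directly on $F$ without revisiting the table $M$. Two small remarks: your claim $l_{a+1}\geq l_a$ needs the one-line caveat that the new diagonal position $s=a+2$ cannot undercut $l_a\leq a+1$ (your argument covers this, but say it explicitly); and you do not need, and should not claim, \emph{strict} monotonicity in Lemma~\ref{cost-lemma}.4 --- the stated inequality $cost(a,a+1,a+1)\leq cost(a+1,a+2,a+2)\leq L$ is all the argument requires, and strictness can fail when all machine parameters vanish.
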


\begin{table}
  \centering
  \renewcommand{\arraystretch}{1.5}
  \setlength{\tabcolsep}{5mm}
  \begin{tabular}{|c|c|c|c|c|c|c|c|c|c|}
\hline
\diagbox[width=0.12\textwidth,height=9mm]{$a$}{$b$} & $0$ & $1$ & $2$ & $3$ \\
\hline
$0$ & \textbf{1} & \textbf{1} & \textbf{1} & \textbf{1} \\
\hline
$1$ & 0 & \textbf{1} & 0 & 0 \\
\hline
$2$ & 0 & 0 & 0 & 0 \\
\hline
$3$ & 0 & 0 & 0 & 0 \\
\hline
\end{tabular}
\caption{An example of $F^{(v)}$, which represents $S_v$.}\label{table2-example}
\end{table}


We introduce two subregions of $F$.
The area in which $b\leq a+1$ is referred to as \emph{Area 1}.
The area in which $b\geq a+1$ is referred to as \emph{Area 2}. 

\begin{lemma}\label{table-F-property}
\begin{enumerate}
  \item For $(a,b),(a+1,b)$ in Area 1, $F(a,b)=0$ implies $F(a+1,b)=0$.

  \item For $(a,b),(a,b+1)$ in Area 2, $F(a,b)=0$ implies $F(a,b+1)=0$.

  \item If $F(a,a+1)=0$, then $F(a+1,a+2)=0$.

  \item If $F(a+1,a+1)=0$ and $F(a+1,a+2)=1$, then $F(a,a+1)=1$.
\end{enumerate}
\end{lemma}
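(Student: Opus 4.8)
The plan is to reduce all four parts to the identity $f(a,b)=\min_{0\le s\le\min(b,a+1)}cost(a,b,s)$ from Lemma~\ref{lemma:f-mcost}, together with the monotonicity and convexity facts collected in Lemma~\ref{cost-lemma}. I expect the recurring obstacle to be bookkeeping on the range of the minimization over $s$: this range equals $[0,b]$ throughout Area~1 and $[0,a+1]$ throughout Area~2, so a horizontal step inside Area~1 or a vertical step inside Area~2 leaves it unchanged, whereas a diagonal step from $(a,a+1)$ to $(a+1,a+2)$ enlarges it by exactly one new admissible value $s=a+2$. Controlling that single new term is where Lemma~\ref{cost-lemma}.4, the monotonicity of $cost(x,x+1,x+1)$, does the real work; the three monotonicity claims of Lemma~\ref{cost-lemma} handle everything else.

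Parts 1 and 2 I would settle by pure termwise monotonicity. In part 1 both points lie in Area~1, so $b\le a+1$ forces $\min(b,a+1)=\min(b,a+2)=b$, and the two minima run over the same set $0\le s\le b$; Lemma~\ref{cost-lemma}.3 gives $cost(a+1,b,s)\ge cost(a,b,s)$ for each such $s$, hence $f(a+1,b)\ge f(a,b)$, so $f(a,b)>L$ forces $f(a+1,b)>L$. Part 2 is the mirror image: in Area~2 we have $b\ge a+1$, both ranges equal $0\le s\le a+1$, and Lemma~\ref{cost-lemma}.2 yields $cost(a,b+1,s)\ge cost(a,b,s)$ termwise, whence $f(a,b+1)\ge f(a,b)$.

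For part 3 I would take $s^{*}$ to be an optimal batch count realizing $f(a+1,a+2)=cost(a+1,a+2,s^{*})$ with $0\le s^{*}\le a+2$, and split on its size. If $s^{*}\le a+1$, then applying Lemma~\ref{cost-lemma}.3 and Lemma~\ref{cost-lemma}.2 in turn gives $cost(a+1,a+2,s^{*})\ge cost(a,a+1,s^{*})\ge f(a,a+1)>L$, the last inequality holding because $s^{*}$ is admissible for $f(a,a+1)$. If instead $s^{*}=a+2$, then $cost(a+1,a+2,a+2)$ is the value of $cost(x,x+1,x+1)$ at $x=a+1$, which by Lemma~\ref{cost-lemma}.4 strictly exceeds its value $cost(a,a+1,a+1)$ at $x=a$, and the latter is at least $f(a,a+1)>L$. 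In both cases $f(a+1,a+2)>L$, i.e.\ $F(a+1,a+2)=0$.

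For part 4 I would first pin down the optimal $s^{*}$ for $f(a+1,a+2)\le L$. Were $s^{*}\le a+1$, Lemma~\ref{cost-lemma}.2 would give $cost(a+1,a+2,s^{*})\ge cost(a+1,a+1,s^{*})\ge f(a+1,a+1)>L$, contradicting $f(a+1,a+2)\le L$; hence $s^{*}=a+2$ and $cost(a+1,a+2,a+2)\le L$. Reading this again as $cost(x,x+1,x+1)$ at $x=a+1$ and using the monotonicity of Lemma~\ref{cost-lemma}.4 gives $cost(a,a+1,a+1)<cost(a+1,a+2,a+2)\le L$. Since $s=a+1$ is admissible for $f(a,a+1)$, this yields $f(a,a+1)\le cost(a,a+1,a+1)\le L$, that is $F(a,a+1)=1$, as required.
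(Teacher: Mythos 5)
Your proposal is correct and follows essentially the same route as the paper: parts 1--2 by termwise monotonicity (Lemma~\ref{cost-lemma}.3 and \ref{cost-lemma}.2) over identical ranges of $s$, and parts 3--4 by isolating the single boundary term $s=a+2$ and controlling it with Lemma~\ref{cost-lemma}.4 --- your case split on the optimal $s^{*}$ is just a reformulation of the paper's ``every admissible $s$ exceeds $L$'' argument. (One cosmetic note: Lemma~\ref{cost-lemma}.4 only asserts that $cost(x,x+1,x+1)$ is increasing, so your strict inequalities should be weak, but weak monotonicity suffices everywhere you invoke it.)
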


\begin{proof}
1. By $F(a,b)=0$, we know $cost(a,b,s)>L$ for all $s\leq b$. 
    Applying lemma~\ref{cost-lemma}.3, $cost(a+1,b,s)\geq cost(a,b,s)>L$. So $F(a+1,b)=0$.  

\smallskip \noindent 
2. By $F(a,b)=0$, we know $cost(a,b,s)>L$ for all $s\leq a+1$. 
    Applying lemma~\ref{cost-lemma}.2, $cost(a,b+1,s)\geq cost(a,b,s)>L$. So $F(a,b+1)=0$.   

\smallskip \noindent 
3. Because $F(a,a+1)=0$, we get $cost(a,a+1,s)>L$ for $s\in[0,a+1]$.
    In particular, $cost(a,a+1,a+1)>L$.
    Hence $cost(a+1,a+2,a+2)>L$ due to lemma \ref{cost-lemma}.4.
    Applying Lemmas \ref{cost-lemma}.2 and \ref{cost-lemma}.3,
    $cost(a+1,a+2,s)\geq cost(a,a+1,s)>L$ for all $s\in[0,a+1]$.
    Together, $cost(a+1,a+2,s)>L$ for all $s\in[0,a+2]$, which implies $F(a+1,a+2)=0$.

\smallskip \noindent 
4.  By $F(a+1,a+2)=1$, we know
      $\mathop{\min}_{0\leq s\leq a+2}cost(a+1,a+2,s)\leq L$.
    
    Because $F(a+1,a+1)=0$, we have $cost(a+1,a+1,s)>L$ for $s\in[0,a+1]$.
    Further with Lemma~\ref{cost-lemma}.2, we get $cost(a+1,a+2,s)>L$ for $s\in[0,a+1].$
    
    Together, we conclude that $cost(a+1,a+2,a+2)\leq L.$
    
    By Lemma~\ref{cost-lemma}.4, $cost(a,a+1,a+1)\leq cost(a+1,a+2,a+2)\leq L$. Therefore
    $\mathop{\min}_{0\leq s\leq a+1}cost(a,a+1,s)\leq L$, which further implies that $F(a,a+1)=1$.
    \qed
\end{proof}

\begin{proof}[Theorem~\ref{thm:S-more}] 
1. Suppose row $a$ are all 0's. By Lemma \ref{table-F-property}.1 and \ref{table-F-property}.3, we know $F(a+1,i)=0$ for all $i\leq a+2$. See Table~\ref{table2-1}.

\begin{table}
  \centering
  \renewcommand{\arraystretch}{1.5}
  \setlength{\tabcolsep}{5mm}
  \begin{tabular}{|c|c|c|c|c|c|c|c|c|c|}
\hline
\diagbox[width=0.14\textwidth,height=8.5mm]{$a$}{$b$} & \makebox[2mm][c]{$0$} & \makebox[2mm][c]{$1$} & $...$ & \makebox[2mm][c]{$a$} & \makebox[2mm][c]{$a+1$} & \makebox[2mm][c]{$a+2$} & $...$ & $...$ & \makebox[2mm][c]{$n_b$} \\
\hline
$0$ &  &  &  &  &  &  &  &  & \\
\hline
$1$ &  &  &  &  &  &  &  &  & \\
\hline
$2$ &  &  &  &  &  &  &  &  &  \\
\hline
$...$ &  &  &  &  &  &  &  &  &  \\
\hline
$a$ & 0 & 0 & 0 & 0 & 0 & 0 & 0 & 0 & 0 \\
\hline
$a+1$ & 0 & 0 & 0 & 0 & 0 & 0 &  &  &  \\
\hline
$...$ &  &  &  &  &  &  &  &  &  \\
\hline
$n_a$ &  &  &  &  &  &  &  &  &  \\
\hline
\end{tabular}
\caption{}\label{table2-1}
\end{table}
    
Further according to Lemma \ref{table-F-property}.2, row $a+1$ are all 0's.

\medskip 2.  If there exists $b\leq a+1$, such that $F(a+1,b)=1$, we know $F(a,b)$ and $F(a+1,b)$ are both in \emph{Area 1}. So by Lemma~\ref{table-F-property}.1 we know $F(a,b)=1$.
  
Otherwise, $F(a+1,b)=0$ for all $b\leq a+1$. So there exists $b_0> a+1$ such that $F(a+1,b_0)=1$. 
    By lemma \ref{table-F-property}.2 we know $F(a+1,a+2)=1$. So by lemma \ref{table-F-property}.4 we know $F(a,a+1)=1$.
  \qed
\end{proof}

\section{Multiple machines case}\label{sect:multiple-day}

Below we give an $O(n^2p\log(n))$ algorithm based on dynamic programming and binary search, and subsection \ref{proof of theorem3} proves its running time.

\subsection{Algorithm}\label{algorithm}
For convenience, let $b_v(a)=\left\{b\geq 0 \mid f(a,b)\leq L\right\}$. Given finite set $A,B\subseteq \mathbb{Z}$, we define $A+B=\{a+b \mid a\in A,b\in B\}$ as their \emph{sumset}.
\begin{definition}
For $v\geq 0$, $a\geq 0$ and $L\geq 0$, define $dp(v,a)$ to be the set of amount $b$ of B-jobs, 
    such that the first $v$ machines can finish the task-combination $(a,b)$ within $L$ time units. Formally,
$$dp(v,a)=\bigcup_{\begin{subarray}{c}x_1+\ldots+x_v=a \\ x_1\geq 0,\ldots,x_v\geq 0\end{subarray}}(b_1(x_1)+\ldots+b_v(x_v)).$$
\end{definition}

 We can immediately get recurrence formula for $dp(v,a)$ by enumerating the number of A-jobs finished by the last machine:
\begin{equation}\label{dp}
  dp(v,a)=\bigcup_{0\leq a'\leq a}\left(dp(v-1,a-a')+b_v(a')\right).
\end{equation}

We obtain the answer by binary searching the smallest time units $L$ that makes $dp(p,n_a)$ containing $n_b$. The following theorem assures that $dp(p,n_a)$ can be calculated in $O(n^2p)$ time. The upper bound of $L$ is $O(n^2)$ by scheduling all jobs to one machine. Therefore the answer can be calculated in $O(n^2p\log(n))$ time.
\begin{theorem}\label{thm:dp}
  Set $dp(v,a)$ is an interval.
\end{theorem}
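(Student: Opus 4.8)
The plan is to induct on the number of machines $v$, using the recurrence (\ref{dp}). Throughout, write $D_k=dp(v-1,k)$ and $B_j=b_v(j)$, so the term indexed by the split point $a'$ is $T_{a'}=D_{a-a'}+B_{a'}$ and $dp(v,a)=\bigcup_{0\le a'\le a}T_{a'}$. Since the sumset of two integer intervals is again an interval, each $T_{a'}$ is an interval as soon as $D_{a-a'}$ and $B_{a'}$ are; the entire difficulty is to show that the union of the $T_{a'}$ over $a'$ does not break into several disjoint pieces. To make the induction close I would \emph{not} merely carry ``$dp(v,a)$ is an interval,'' but a strengthened hypothesis mirroring, one level up, the single-machine facts already proved: (a) $dp(v,a)$ is an interval; (b) nonemptiness is downward closed in $a$ (if $dp(v,a)=\emptyset$ then $dp(v,a+1)=\emptyset$); and (c) a row-to-row dichotomy analogous to Theorem~\ref{thm:S-more}.2, namely that whenever $dp(v,a)$ and $dp(v,a+1)$ are both nonempty they either overlap or sit in the tight ``diagonal'' position $\max dp(v,a)=a+1$, $\min dp(v,a+1)=a+2$. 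For $v=1$ these are exactly Theorem~\ref{thm:S}, Theorem~\ref{thm:S-more}.1, and Theorem~\ref{thm:S-more}.2, giving the base case; and Theorem~\ref{thm:S-more} supplies the same dichotomy for the factors $B_j$ directly.

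For the inductive step I would first record two routine reductions. Using (b) for $dp(v-1,\cdot)$ together with Theorem~\ref{thm:S-more}.1 for $B$, the indices $a'$ with $T_{a'}\ne\emptyset$ form a contiguous range, since $B_{a'}\ne\emptyset$ holds on a prefix of $a'$ while $D_{a-a'}\ne\emptyset$ holds on a suffix. Second, a short lemma shows that a union of intervals indexed over a contiguous range is itself an interval as long as every consecutive pair $T_{a'}\cup T_{a'+1}$ is an interval (the nonempty middle terms in the range guarantee the pieces chain together). Everything therefore comes down to connecting consecutive terms.

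The heart of the argument, and the step I expect to be the main obstacle, is showing $T_{a'}\cup T_{a'+1}$ is an interval, where $T_{a'}=D_k+B_{a'}$ and $T_{a'+1}=D_{k-1}+B_{a'+1}$ with $k=a-a'$. As $a'$ advances, the argument of $B$ increases by one while that of $D$ decreases by one. Applying the dichotomy to each factor leaves four cases according to whether each transition is an ``overlap'' or a ``diagonal'' transition. Writing the two obstruction inequalities $d_{k-1}^-+b_{a'+1}^-\le d_k^++b_{a'}^++1$ and $d_k^-+b_{a'}^-\le d_{k-1}^++b_{a'+1}^++1$ in terms of the endpoints of the four intervals, one observes that a diagonal transition contributes a $+1$ to one lower endpoint but simultaneously frees a $-1$ on the matching upper endpoint, and—crucially—the $B$-transition (argument $+1$) and the $D$-transition (argument $-1$) move the intervals in \emph{opposite} directions, so these $\pm1$'s compensate rather than accumulate. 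A short case check then verifies both inequalities in all four cases, so the union is an interval. Naively carrying only ``union of consecutive rows is an interval'' fails here by exactly one, which is why the sharp diagonal form in (c) is needed.

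Finally, to close the induction I would re-establish (b) and (c) for $dp(v,\cdot)$ itself. Downward-closedness (b) follows because deleting one A-job from a feasible $v$-machine schedule only decreases completion times. Re-proving the dichotomy (c) one level up—the analogue of Theorem~\ref{thm:S-more}.2 for $dp(v,\cdot)$—is the delicate part where I expect most of the analysis to lie: it requires tracking the extreme endpoints $\min dp(v,a)$ and $\max dp(v,a)$ through the union of the $T_{a'}$ and pinning down precisely when the tight diagonal position $\max dp(v,a)=a+1,\ \min dp(v,a+1)=a+2$ arises. The same opposite-direction compensation noted above is what rules out a spurious off-by-one gap and lets the invariant propagate from $v-1$ to $v$.
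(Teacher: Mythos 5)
Your outline has the right skeleton, and it is essentially the paper's: induct on $v$, carry alongside the interval property both emptiness-monotonicity and an ``overlap or adjacent'' dichotomy, restrict the union (\ref{dp}) to the contiguous range of nonempty terms, and glue consecutive terms by a four-case compensation argument (this is Lemma~\ref{lemma:dp} plus the paper's proof of Theorem~\ref{thm:dp}). The gap is in your invariant (c), and it sits exactly at the step you defer: the pinned dichotomy ``disjointness implies $\max dp(v,a)=a+1$ and $\min dp(v,a+1)=a+2$'' does not propagate from $v-1$ to $v$, and is in fact not derivable from the hypotheses you carry. Consider the configuration $dp(v-1,k)=\{k+1\}$ for all relevant $k$ and $b_v(a')=\{a'+1\}$ for all relevant $a'$. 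It satisfies everything you assume: (a), (b), (c) at level $v-1$, and Theorems~\ref{thm:S} and~\ref{thm:S-more} for machine $v$ (every disjoint consecutive pair sits in your tight diagonal position). Yet the recurrence gives $dp(v,a)=\{a+2\}$ and $dp(v,a+1)=\{a+3\}$: disjoint and nonempty, but $\max dp(v,a)=a+2\neq a+1$. Each factor that is ``in diagonal mode'' shifts the extreme endpoints up by one more, so no invariant pinning the gap at a fixed height $a+1$/$a+2$ can close the induction; your planned endpoint-tracking runs into exactly this. (Whether such a configuration is realizable by actual speeds is irrelevant: your inductive step uses only the carried properties, so it must hold for every configuration consistent with them, and it does not.)

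This is precisely why the paper's Lemma~\ref{lemma:dp}.2 states the dichotomy with a floating witness---there exists $b_0$ with $b_0\in dp(v,a)$ and $b_0+1\in dp(v,a+1)$---rather than pinning $b_0=a+1$. The floating form does propagate: comparing terms of $dp(v,a)$ and $dp(v,a+1)$ that differ in a single factor (same $a'$, or $a'$ versus $a'+1$), the lower-level dichotomy hands you either a common element or a witness pair, and the witness simply drifts (in the configuration above it holds with $b_0=a+2$). It is also still strong enough for your four-case gluing, because in the both-diagonal case the two $\pm 1$'s cancel: $(c_0+1)+z_0=c_0+(z_0+1)$, where $c_0,c_0+1$ and $z_0,z_0+1$ are the two witness pairs; the sharp values $a'+1,a'+2$ supplied by Theorem~\ref{thm:S-more}.2 are never needed, only adjacency is. So the repair is to weaken (c) to the floating form, after which your outline becomes the paper's proof. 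Separately, your justification of (b) (``deleting one A-job only decreases completion times'') is not literally true: deleting an A-job that forms a singleton batch forces two neighbouring B-batches to merge, which can increase the cost; (b) is correct, but it needs either the paper's inductive argument (Lemma~\ref{lemma:dp}.1) or deleting an adjacent B-batch together with the A-job.
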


Suppose Theorem \ref{thm:dp} holds. Let $dp(v,a)=[dp(v,a).l,dp(v,a).r]$. Theorem \ref{thm:S} states that $b_v(a)$ is an interval. Let $b_v(a)=[b_v(a).l,b_v(a).r]$. And (\ref{dp}) can be simplified as:
$$
    dp(v,a).l = min_{0\leq a'\leq a}\left(dp(v-1,a-a').l+b_v(a').l\right),
$$
$$
    dp(v,a).r = max_{0\leq a'\leq a}\left(dp(v-1,a-a').r+b_v(a').r\right).
$$

\subsection{Proof of theorem 3}\label{proof of theorem3}
To prove this theorem, we need some knowledge about $dp(v,a)$ as shown below.

\begin{lemma}\label{lemma:dp}
For any $v\geq 1$ and $a\geq 0$,
\begin{enumerate}
  \item If $dp(v,a)=\emptyset$, then $dp(v,a+\delta)=\emptyset$ for all $\delta>0$.
  \item If $dp(v,a)\neq \emptyset$ and $dp(v,a+1)\neq \emptyset$, at leat one of the following holds:\\
      (1) $dp(v,a)\cap dp(v,a+1)\neq \emptyset$.\\
      (2) there exists $b_0\in dp(v,a)$ and $b_0+1\in dp(v,a+1)$.
\end{enumerate}

\end{lemma}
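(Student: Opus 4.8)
The plan is to prove both parts of Lemma~\ref{lemma:dp} simultaneously by induction on the number of machines $v$, peeling off the last machine through recurrence~(\ref{dp}) and always asking whether the \emph{extra} A-job (when passing from $a$ to $a+1$) is placed on machine $v$ or on the first $v-1$ machines. The whole argument is carried out at the level of explicit membership witnesses, so it never invokes the (as yet unproven) interval property of Theorem~\ref{thm:dp} and stays free of circularity. The base case $v=1$ is immediate since $dp(1,a)=b_1(a)$: Part~1 is the chained contrapositive of Theorem~\ref{thm:S-more}.1 applied to $F^{(1)}$, and Part~2 is exactly Theorem~\ref{thm:S-more}.2, whose alternative (1) reads $b_1(a)\cap b_1(a+1)\neq\emptyset$ and whose alternative (2), $F(a,a+1)=F(a+1,a+2)=1$, gives $b_0=a+1\in dp(1,a)$ with $b_0+1=a+2\in dp(1,a+1)$. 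Throughout the inductive step I will use the single-machine facts for machine $v$ itself, namely Theorem~\ref{thm:S-more} applied to $F^{(v)}$, together with the elementary observation that whenever $b_v(0)\neq\emptyset$ it contains $0$, since $f(0,0)=0$.

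For Part~1 in the inductive step I would argue by contradiction: assuming $dp(v,a)=\emptyset$ but $dp(v,a+1)\neq\emptyset$, fix a witnessing split $a+1=\alpha+\beta$ with $dp(v-1,\alpha)\neq\emptyset$ and $b_v(\beta)\neq\emptyset$. If $\beta\geq1$, then $b_v(\beta-1)\neq\emptyset$ by Theorem~\ref{thm:S-more}.1 for machine $v$, so the split $\alpha+(\beta-1)=a$ already yields $dp(v,a)\supseteq dp(v-1,\alpha)+b_v(\beta-1)\neq\emptyset$, a contradiction; if $\beta=0$, then $dp(v-1,a+1)\neq\emptyset$, hence $dp(v-1,a)\neq\emptyset$ by Part~1 for $v-1$, and the split $a+0=a$ together with $0\in b_v(0)$ again forces $dp(v,a)\neq\emptyset$. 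The statement for arbitrary $\delta>0$ then follows by iterating this one-step implication.

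The heart of the matter is Part~2. Given $dp(v,a)\neq\emptyset$ and $dp(v,a+1)\neq\emptyset$, I would take a witnessing split $a+1=\alpha+\beta$ of $dp(v,a+1)$ and branch on $\beta$. When $\beta\geq1$, both $b_v(\beta-1)$ and $b_v(\beta)$ are nonempty, so Theorem~\ref{thm:S-more}.2 for machine $v$ supplies either a common $c\in b_v(\beta-1)\cap b_v(\beta)$ or a pair $c_0\in b_v(\beta-1)$, $c_0+1\in b_v(\beta)$; fixing any $d\in dp(v-1,\alpha)$ and adding it (via the legal splits $\alpha+(\beta-1)=a$ and $\alpha+\beta=a+1$) transports these into either a common point $d+c$ of $dp(v,a)\cap dp(v,a+1)$ or a pair $d+c_0\in dp(v,a)$, $d+c_0+1\in dp(v,a+1)$, which are exactly alternatives (1) and (2). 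When $\beta=0$, we have $dp(v-1,a+1)\neq\emptyset$, hence $dp(v-1,a)\neq\emptyset$ by Part~1 for $v-1$; applying Part~2 for $v-1$ to these two sets and then adding the harmless $0\in b_v(0)$ lifts the resulting overlap or adjacency up to $dp(v,a)$ and $dp(v,a+1)$ in the same fashion.

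The main obstacle I anticipate is organizing the simultaneous induction correctly rather than any hard computation: Part~2 for $v$ consumes \emph{both} Part~1 and Part~2 of the inductive hypothesis for $v-1$, so the two statements must be proved together in one induction. The conceptual crux is recognizing that a single unit increment of $a$ can always be realized either as a unit increment of the load of machine $v$ alone (reducing to the single-machine Theorem~\ref{thm:S-more}) or as a unit increment absorbed entirely by the first $v-1$ machines (reducing to the inductive hypothesis), and that the batch $b_v(0)\ni0$ lets us splice either reduction back without disturbing the B-count. Checking that every split invoked has nonnegative indices summing correctly to $a$ or $a+1$ is routine but should be verified explicitly.
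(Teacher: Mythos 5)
Your proof is correct, and while it shares the paper's overall skeleton (induction on $v$, base case $v=1$ read off from Theorem~\ref{thm:S-more}, decomposition through the recurrence~(\ref{dp})), the case analysis is organized genuinely differently. The paper first uses Part~1 (for $v-1$) and Theorem~\ref{thm:S-more}.1 to rewrite the union in~(\ref{dp}) over an \emph{interval} of indices $[l_1,r_1]\cap[l_2,r_2]$, and then splits on $l_1=r_1$ versus $l_1<r_1$; the second case requires the additional (slightly subtle) observation that the shifted index intervals for $a$ and $a+1$ still share a common index $a'$, after which the inductive hypothesis is applied to $dp(v-1,a-a')$ and $dp(v-1,a+1-a')$. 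You instead extract a single witnessing split $a+1=\alpha+\beta$ of $dp(v,a+1)$ and branch on whether the extra A-job sits on machine $v$ ($\beta\geq 1$: decrement $\beta$ and invoke Theorem~\ref{thm:S-more} for machine $v$) or on the first $v-1$ machines ($\beta=0$: decrement $\alpha$ and invoke the inductive hypothesis), then transport the resulting overlap or adjacency by adding a fixed element of the other factor. This witness-based bookkeeping avoids the interval-overlap argument entirely and in fact never uses the hypothesis $dp(v,a)\neq\emptyset$ in Part~2 (it derives it), so you prove a marginally stronger statement; what the paper's interval formulation buys in exchange is notation that is reused verbatim in the subsequent proof of Theorem~\ref{thm:dp}. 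One pedantic point: your claim $0\in b_v(0)$ needs $L\geq 0$ (or monotonicity of $f(0,\cdot)$); but in every place you use it, the witnessing split already gives $b_v(0)\neq\emptyset$, and any element of $b_v(0)$ serves equally well, so nothing breaks.
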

\begin{proof}
1. We prove it by induction on $v$. When $v=1$, we have $dp(1,a)=b_1(a)$. Notice that Theorem \ref{thm:S-more}.1 states that if $b_v(a)=\emptyset$, then $b_v(a+\delta)=\emptyset$. So the conclusion holds.

Suppose conclusion holds for $v-1$. Recall that
  $$  dp(v,a)=\bigcup_{0\leq a'\leq a}\left(dp(v-1,a-a')+b_v(a')\right),$$
  and
  $$  dp(v,a+\delta)=\bigcup_{0\leq a'\leq a+\delta}\left(dp(v-1,a+\delta-a')+b_v(a')\right).$$
  Because $dp(v,a)=\emptyset$, we know either $dp(v-1,a-a')=\emptyset$ or $b_v(a')=\emptyset$ for $a'\in [0,a]$.
  
  In order to prove $dp(v,a+\delta)=\emptyset$, we only need to prove that either $dp(v-1,a+\delta-a')=\emptyset$ or $b_v(a')=\emptyset$ for $a'\in [0,a+\delta]$.
  \setcounter{case}{0}
  \begin{case}[$0\leq a'\leq a$]
   If $b_v(a')=\emptyset$, conclusion holds obviously. Otherwise, $dp(v-1,a-a')=\emptyset$. Because $a+\delta-a'>a-a'$, by conclusion hypothesis we know $dp(v-1,a+\delta-a')=\emptyset$.
  \end{case}
  \begin{case}[$a+1\leq a'\leq a+\delta$]
  If there exists $a''\in [0,a]$, such that $b_v(a'')=\emptyset$, we know $b_v(a')=\emptyset$ by theorem \ref{thm:S-more}.1. Otherwise, $dp(v-1,a-a'')=\emptyset$ for all $a''\in [0,a]$. Let $a''=a$, we have $dp(v-1,0)=\emptyset$. Since $a+\delta-a'>0$, by conclusion hypothesis we know $dp(v-1,a+\delta-a')=\emptyset$.
  \end{case}

2. We prove it by induction on $v$. When $v=1$, we have $dp(1,a)=b_1(a)$. Notice that Theorem \ref{thm:S-more}.2 states that, if $b_v(a)\ne \emptyset$ and $b_v(a+1)\ne \emptyset$, either $b_v(a)\cap b_v(a+1)\ne \emptyset$ or $a+1\in b_v(a)$, $a+2\in b_v(a+1)$. So the conclusion holds.

Suppose conclusion holds for $v-1$. Lemma \ref{lemma:dp}.1 states that non-empty $dp(v-1,a)$ is consecutive when $a$ changes. Theorem \ref{thm:S-more}.1 states that non-empty $b_v(a)$ is consecutive when $a$ changes. So (\ref{dp}) can be written as
\begin{equation}\label{dp-a}
  dp(v,a)=\bigcup_{a'\in [l_1,r_1]\cap [l_2,r_2]}(dp(v-1,a-a')+b_v(a')),
\end{equation}
where $dp(v-1,a-a')\ne \emptyset$ for all $a'\in [l_1,r_1]$, and $b_v(a')\ne \emptyset$ for all $a'\in [l_2,r2]$. So
\begin{equation}\label{dp-a+1}
  dp(v,a+1)=\bigcup_{a'\in [l_1+1,r_1+1]\cap [l_2,r_2]}(dp(v-1,a+1-a')+b_v(a')).
\end{equation}

\setcounter{case}{0}
\begin{case}[$l_1=r_1$]
In this case 
$$dp(v,a)=dp(v-1,a-l_1)+b_v(l_1),$$
and
$$dp(v,a+1)=dp(v-1,a-l_1)+b_v(l_1+1).$$

If $b_v(l_1)\cap b_v(l_1+1)\neq \emptyset$, we know $dp(v,a)\cap dp(v,a+1)\neq \emptyset$. Otherwise, according to theorem \ref{thm:S-more}.2, we have $l_1+1\in b_v(l_1)$ and $l_1+2\in b_v(l_1+1)$. Suppose $b_0\in dp(v-1,a-l_1)$, then $b_0+l_1+1\in dp(v,a)$ and $b_0+l_1+2\in dp(v,a+1)$.
\end{case}
\begin{case}[$l_1<r_1$]
Notice that in this case, the intersection of $[l_1,r_1]\cap [l_2,r_2]$ and $[l_1+1,r_1+1]\cap [l_2,r_2]$ is non-empty. So there exists $a'$, such that $dp(v-1,a-a')+b_v(a')$ and $dp(v-1,a+1-a')+b_v(a')$ are both non-empty.

If $dp(v-1,a-a')\cap dp(v-1,a+1-a')\neq \emptyset$, then $dp(v,a)\cap dp(v,a+1)\neq \emptyset$. Otherwise, according to induction hypothesis, there exists $b_0\in dp(v-1,a-a')$ and $b_0+1\in dp(v-1,a+1-a')$. Suppose $b_1\in b_v(a')$, we know $b_0+b_1\in dp(v,a)$ and $b_0+b_1+1\in dp(v,a+1)$.  \qed
\end{case}
\end{proof}

We can now prove Theorem \ref{thm:dp}.
\begin{proof}[of Theorem \ref{thm:dp}]
  We prove it by induction on $v$. When $v=1$, we have $dp(1,a)=b_1(a)$. The conclusion holds by Theorem \ref{thm:S}. 
  
  Suppose conclusion holds for $v-1$. Lemma \ref{lemma:dp}.1 states that non-empty $dp(v,a)$ is consecutive when $a$ changes. Theorem \ref{thm:S-more}.1 states that non-empty $b_v(a)$ is consecutive when $a$ changes. So (\ref{dp}) can be written as
   $$dp(v,a)=\bigcup_{a'\in [l,r]}(dp(v-1,a-a')+b_v(a')),$$
   where $dp(v-1,a-a')+b_v(a')\neq \emptyset$ for all $a'\in [l,r]$.
   
   To prove $dp(v,a)$ an interval, it reduces to prove that for all $a'\in[l,r-1]$, $dp(v-1,a-a')+b_v(a')\cup dp(v-1,a-(a'+1))+b_v(a'+1)$ is an interval. According to theorem \ref{thm:S-more}.2 and lemma \ref{lemma:dp}.2, there are four cases we need to discuss.
   \setcounter{case}{0}
   \begin{case} [$b_v(a')\cap b_v(a'+1)\neq \emptyset;$\\
\indent\indent$dp(v-1,a-a')\cap dp(v-1,a-(a'+1))\neq \emptyset$]
   
   The intersection of $dp(v-1,a-a')+b_v(a')$ and $dp(v-1,a-(a'+1))+b_v(a'+1)$ is non-empty. By conclusion hypothesis and theorem \ref{thm:S}, their union is an interval.
   \end{case}
   
   \begin{case} [$a'+1\in b_v(a'),a'+2\in b_v(a'+1);$\\
\indent\indent$dp(v-1,a-a')\cap dp(v-1,a-(a'+1))\neq \emptyset$]

   Suppose $b_0\in dp(v-1,a-a')\cap dp(v-1,a-(a'+1))$, we have $b_0+a'+1\in dp(v-1,a-a')+b_v(a')$, and $b_0+a'+2\in dp(v-1,a-(a'+1))+b_v(a'+1)$. By conclusion hypothesis and theorem \ref{thm:S}, their union is an interval.
   \end{case}
   
   \begin{case} [$b_v(a')\cap b_v(a'+1)\neq \emptyset;$\\
\indent\indent$\mbox{ there exists }b_0\in dp(v-1,a-(a'+1)),b_0+1\in dp(v-1,a-a')$]
   
   Proof is same as in case 2.
   
   \end{case}
   
   \begin{case} [$a'+1\in b_v(a'),a'+2\in b_v(a'+1);$\\
\indent\indent$\mbox{ there exists }b_0\in dp(v-1,a-(a'+1)),b_0+1\in dp(v-1,a-a')$]

   Then $b_0+a'+2\in dp(v-1,a-a')+b_v(a')$, and $b_0+a'+2\in dp(v-1,a-(a'+1))+b_v(a'+1)$. So their intersection is non-empty. By conclusion hypothesis and theorem \ref{thm:S}, their union is an interval. \qed
   
   \end{case}
\end{proof}

\subsection{The linear case}
It's easy to know that there exists optimal solution that each machine processes several A-jobs and then several B-jobs. Therefore
\begin{equation*}
     f(a,b)=t^A_v[a>0]+t^B_v[b>0]+k^A_va+k^B_vb,
\end{equation*}
where $[\cdot]$ denotes the Iverson bracket.
And
\begin{equation*}
  \begin{split}
     b_v(a)&=\{b\geq 0:f(a,b)\leq L\} \\
       &=\left[0,\frac{1}{k^B_v}(L-t^A_v[a>0]-t^B_v[b>0])-\frac{k^A_v}{k^B_v}a\right]. \\
  \end{split}
\end{equation*}
Set $b_v(a)$ is obviously an interval and satisfies Theorem \ref{thm:S-more}. And $dp(v,a)$ satisfies Theorem \ref{thm:dp}. 

\bibliographystyle{unsrt}
\bibliography{TASKSCH}

\section*{Acknowledgement}
We sincerely appreciate Ruixi Luo for his proof reading and modification advise.

\clearpage
\appendix

\section{A proof of Lemma~\ref{lemma:cost^A-diff}}
\begin{proof}[of Lemma~\ref{lemma:cost^A-diff}]
  First we prove $cost^A(a,s)$ is convex function of $s$. The proof of $cost^B(b,s)$ is the same.
  
  Let $a=sq+r$, where $0\leq r<s$, we have 
  $$
  cost^A(a,s)=\left[r(q+1)^2+(s-r)q^2\right]k^A_v+st^A_v.
  $$
  Let $g(s)=r(a+1)^2+(s-r)q^2$, it reduces to prove that $g(s)$ is convex.
  
  Let
  \begin{equation}\label{division-s}
    a=sq+r\mbox{, where }0\leq r<s,
  \end{equation}
  \begin{equation}\label{division-s+1}
    a=(s+1)q'+r'\mbox{, where }0\leq r'<s+1,
  \end{equation}
  \begin{equation}\label{division-s+2}
    a=(s+2)q''+r''\mbox{, where }0\leq r''<s+2.
  \end{equation}
    Notice that $q\geq q'\geq q''$. 
    
    Then let
  \begin{equation*}
    \begin{split}
       \Delta_s &= g(s+1)-g(s) \\
         &= 2r'q'+r'+(s+1)q'^2-2rq-r-sq^2,
    \end{split}
  \end{equation*}
  \begin{equation*}
    \begin{split}
       \Delta_{s+1} &= g(s+2)-g(s+1) \\
         &= 2r''q''+r''+(s+2)q''^2-2r'q'-r'-(s+1)q'^2.
    \end{split}
  \end{equation*}
  We only need to prove
  $$\Delta_{s+1}\geq \Delta_{s}.$$
  
  \setcounter{case}{0}
  \begin{case}[$q=q'=q''$]
  By (\ref{division-s}), (\ref{division-s+1}) and (\ref{division-s+2}) we know $r'-r=r''-r'=-q$.
  And we have 
   $$\Delta_s=(2q+1)(r'-r)+q^2,$$
   and 
   $$\Delta_{s+1}=(2q+1)(r''-r')+q^2.$$
   Therefore $\Delta_s=\Delta_{s+1}$.
  \end{case}
  
  \begin{case}[$q>q'=q''$]
    
\begin{figure}
  \centering
  \includegraphics[width=1.2\textwidth]{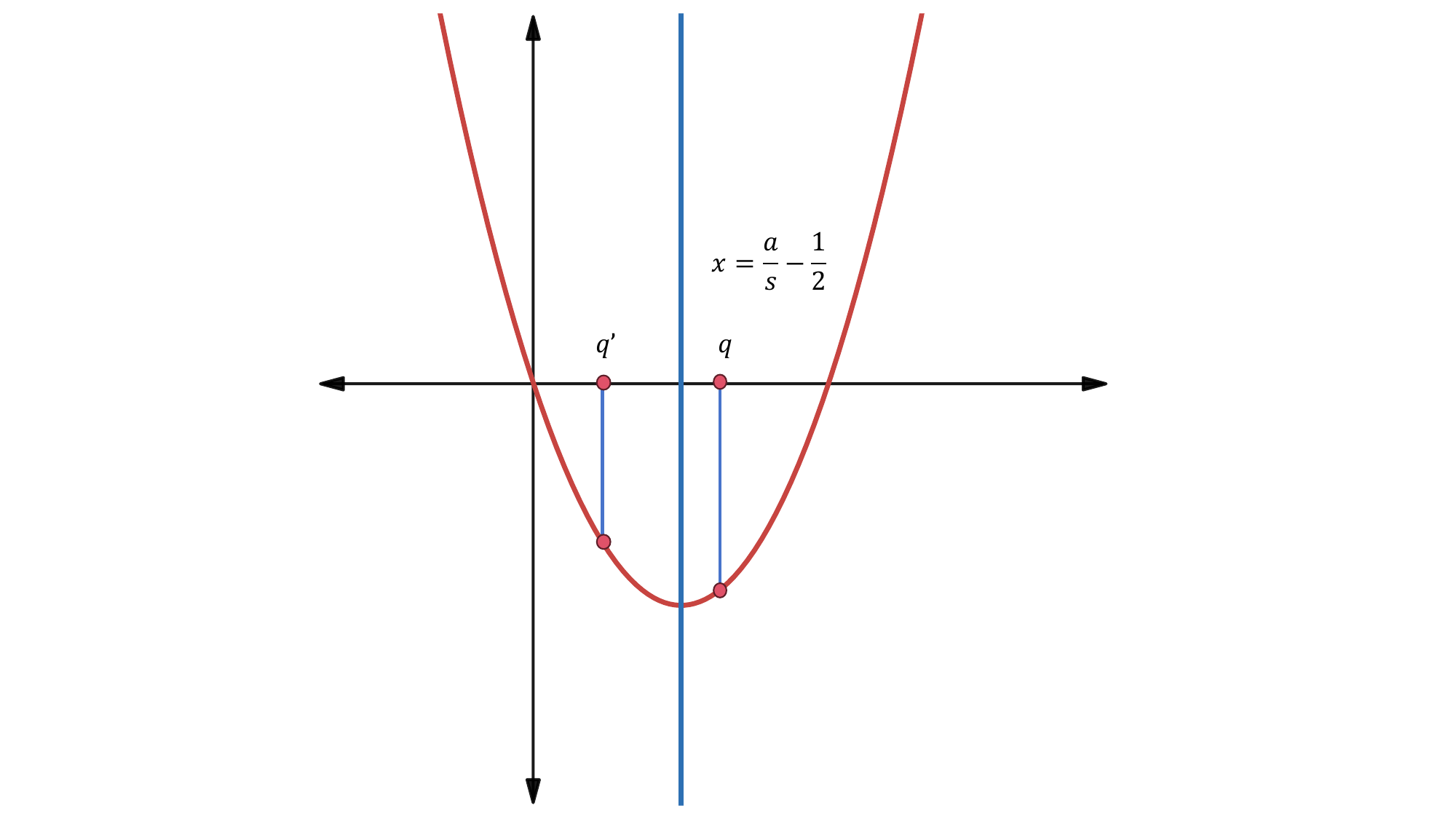}
  \caption{}\label{fig-lemma1}
\end{figure}

  We have
  $$\Delta_s=2r'q'+r'+(s+1)q'^2-2rq-r-sq^2,$$
  and
  \begin{equation*}
    \begin{split}
       \Delta_{s+1} &= 2r''q''+r''+(s+2)q''^2-2r'q'-r'-(s+1)q'^2 \\
         &= 2r''q'+r''+q'^2-2r'q'-r'.
    \end{split}
  \end{equation*}
  Apply (\ref{division-s}), (\ref{division-s+1}) and (\ref{division-s+2}) to the above formulas and eliminate $r$, $r'$ and $r''$. We have
  $$\Delta_{s+1}-\Delta_s=\left[-sq^2+(2a-s)q\right]-\left[-sq'^2+(2a-s)q'\right].$$
  Let $h(x)=sx^2-(2a-s)x$, it reduces to prove that $h(q)\leq h(q')$.
  
  Because $q>q'$, let $q=q'+\alpha$, where $\alpha\geq 1$.

  Notice that $h(x)$ is a quadratic function with axis of symmetry $x=\frac{a}{s}-\frac{1}{2}$. See figure \ref{fig-lemma1}. We consider the middle point between $q'$ and $q$.
  \begin{equation*}
    \begin{split}
       \frac{q+q'}{2} &= \frac{2q-\alpha}{2} \\
         &= q-\frac{\alpha}{2} \\
         &= \lfloor \frac{a}{s} \rfloor - \frac{\alpha}{2} \\
         &\leq \frac{a}{s} - \frac{\alpha}{2} \\
         &\leq \frac{a}{s} - \frac{1}{2}.
    \end{split}
  \end{equation*}
  Therefore $h(q)\leq h(q')$.
  \end{case}
  \begin{case}[$q=q'>q''$]
     
\begin{figure}
  \centering
  \includegraphics[width=1.2\textwidth]{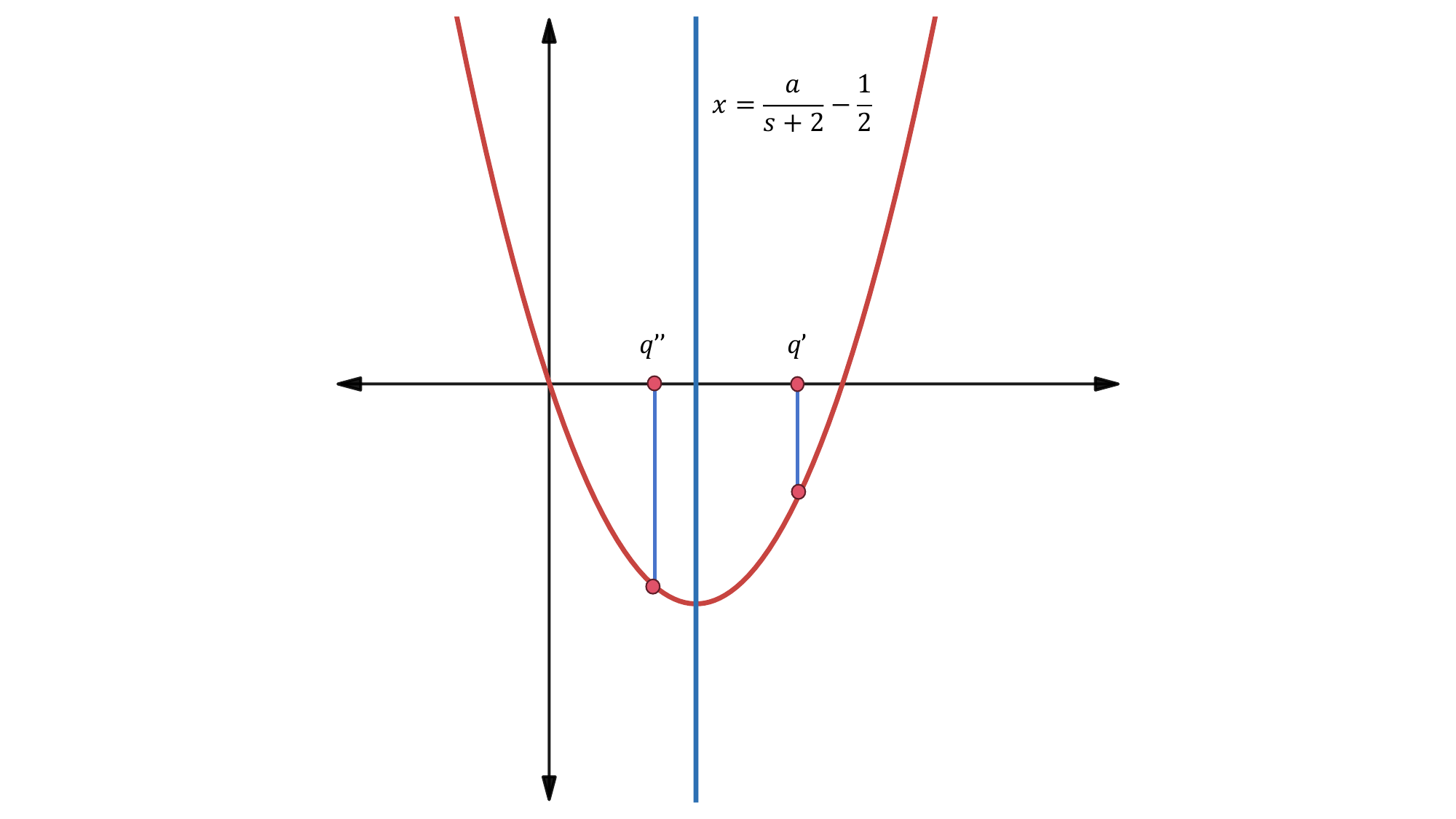}
  \caption{}\label{fig-lemma2}
\end{figure}

    We have
    \begin{equation*}
      \begin{split}
         \Delta_s &= 2r'q'+r'+(s+1)q'^2-2rq-r-sq^2 \\
           &= 2(r'-r)q'+r'-r+q'^2,
      \end{split}
    \end{equation*}
    and 
    \begin{equation*}
      \Delta_{s+1} = 2r''q''+r''+(s+2)q''^2-2r'q'-r'-(s+1)q'^2.
    \end{equation*}
  Apply (\ref{division-s}), (\ref{division-s+1}) and (\ref{division-s+2}) to the above formulas and eliminate $r$, $r'$ and $r''$. We have
  $$\Delta_{s+1}-\Delta_s=-(s+2)q''^2+(2a-s-2)q''+(s+2)q'^2-(2a-s-2)q'.$$
  Let $h(x)=(s+2)x^2-(2a-s-2)x$, it reduces to prove that $h(q'')\leq h(q')$.
  
  Because $q'>q''$, let $q'=q''+\alpha$, where $\alpha\geq 1$.

  Notice that $h(x)$ is a quadratic function with axis of symmetry $x=\frac{a}{s+2}-\frac{1}{2}$. See figure \ref{fig-lemma2}. We consider the middle point between $q'$ and $q''$. We have
  \begin{equation*}
    \begin{split}
       \frac{q'+q''}{2} &= \frac{2q''+\alpha}{2} \\
         &= q''+\frac{\alpha}{2} \\
         &= \lfloor \frac{a}{s+2} \rfloor + \frac{\alpha}{2}, \\
\mbox{let }\frac{a}{s+2}+\beta &\mbox{, where }0\leq \beta \leq 1\mbox{, then }\\
         &= \frac{a}{s+2}-\beta+\frac{\alpha}{2} \\
         &> \frac{a}{s+2}-1+\frac{1}{2} \\
         &= \frac{a}{s+2}-\frac{1}{2}.
    \end{split}
  \end{equation*}
  Therefore $h(q'')<h(q')$.
  \end{case}
  
  \begin{case}[$q>q'>q''$]
     
  Recall that
    \begin{equation*}
         \Delta_s = 2r'q'+r'+(s+1)q'^2-2rq-r-sq^2,
  \end{equation*}
  \begin{equation*}
         \Delta_{s+1} = 2r''q''+r''+(s+2)q''^2-2r'q'-r'-(s+1)q'^2.
  \end{equation*}
   Apply (\ref{division-s}), (\ref{division-s+1}) and (\ref{division-s+2}) to the above formulas and eliminate $r$, $r'$ and $r''$. We have
   \begin{equation*}
   \begin{split}
      \Delta_{s+1}-\Delta_s=(s+2)q'^2-&(2a-s-2)q'-\left[sq^2-(2a-s)q\right]- \\
      &\left[(s+2)q''^2-(2a-s-2)q''-(sq'^2-(2a-s)q')\right].
   \end{split}
   \end{equation*}
  \end{case}
  Let $h(x_1,x_2)=(s+2)x_1^2-(2a-s-2)x_1-sx_2^2+(2a-s)x_2$, it reduces to prove that $h(q'',q')\leq h(q',q)$.
  
  Notice that $h(x_1,x_2)$ is a hyperbolic paraboloid, we reduce it to its general form:
  \begin{equation*}
  \begin{split}
     h(x_1,x_2)+\frac{\frac{1}{4}(2a-s-2)^2}{s+2}&-\frac{\frac{1}{4}(2a-s)^2}{s}=\\
      &\frac{\left(x_1-\frac{\frac{1}{2}(2a-s-2)}{s+2}\right)^2}{\left(\frac{1}{\sqrt{s+2}}\right)^2}-\frac{\left(x_2-\frac{\frac{1}{2}(2a-s)}{s}\right)^2}{\left(\frac{1}{\sqrt{s}}\right)^2}.
  \end{split}
  \end{equation*}
  We can observe its saddle point $(x_1^*,x_2^*,h^*)$, where
  \begin{equation*}
    \begin{split}
       x_1^* &= \frac{\frac{1}{2}(2a-s-2)}{s+2}, \\
       x_2^* &= \frac{\frac{1}{2}(2a-s)}{s}, \\
       h^* &= \frac{\frac{1}{4}(2a-s)^2}{s}-\frac{\frac{1}{4}(2a-s-2)^2}{s+2}.
    \end{split}
  \end{equation*}
     
\begin{figure}
  \centering
  \includegraphics[width=1.2\textwidth]{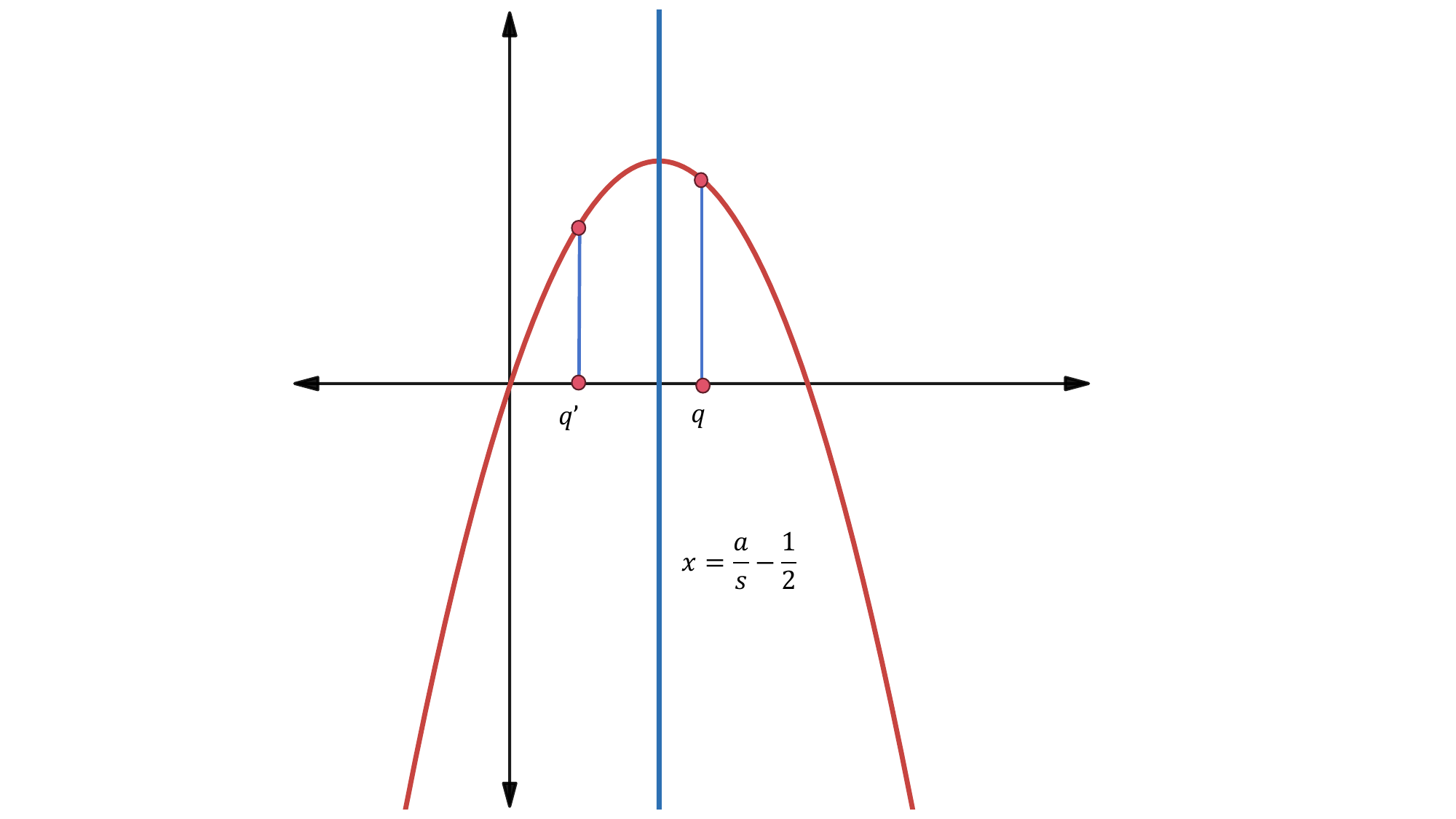}
  \caption{}\label{fig-lemma4}
\end{figure}
  
  First, consider the cross section of the hyperbolic paraboloid parallel to plane $x_1Oh$: $h(x_1,*)$. Notice that it's a parabola with axis of symmetry $x=\frac{a}{s+2}-\frac{1}{2}$. Consider $q'$ and $q''$. Same as case 3, we know $h(q',*)\geq h(q'',*)$.
  
  Next, consider the cross section of the hyperbolic paraboloid parallel to plane $x_2Oh$: $h(*,x_2)$. Notice that it's a parabola with axis of symmetry $x=\frac{a}{s}-\frac{1}{2}$. See figure \ref{fig-lemma4}. Consider the middle point of $q$ and $q'$.
  
  Let $q=q'+\alpha$, where $\alpha\geq 1$. We have
  \begin{equation*}
    \begin{split}
       \frac{q'+q}{2} &= \frac{2q-\alpha}{2} \\
         &= q-\frac{\alpha}{2} \\
         &= \lfloor\frac{a}{s}\rfloor-\frac{\alpha}{2} \\
         &\leq \frac{a}{s}-\frac{1}{2}.
    \end{split}
  \end{equation*}
  Therefore $h(*,q)\geq h(*,q')$.
  
  By $h(q',*)\geq h(q'',*)$ and $h(*,q)\geq h(*,q')$, we know
  $$h(q',q)\geq h(q'',q)\geq h(q'',q').$$
  
  At last we prove that $mcost^A(a,s)$ is convex function of $s$. Recall that
  $$mcost^A(a,s)=\min_{s'\in \{s-1,s,s+1\},s'\geq 0}cost^A(a,s').$$
  \setcounter{case}{0}
  \begin{case}[$cost^A(a,s)$ is increasing (or decreasing) for $s$]

  Then $mcost^A(a,s)=cost^A(a,s-1)$ (or $cost^A(a,s+1)$).
  \end{case}
  \begin{case}[There exists $s_0$, such that when $s\leq s_0$, $cost^A(a,s)$ decreases, and when $s\geq s_0$, $cost^A(a,s)$ increases]
  
  According to case 1, $mcost^A(a,s)$ is convex when $s\leq s_0$ and $s\geq s_0$. It reduces to prove that 
  \begin{equation}\label{mcostA-diff-proof}
    mcost^A(a,s_0)-mcost^A(a,s_0-1)\leq mcost^A(a,s_0+1)-mcost^A(a,s_0).
  \end{equation}
  Notice that
  $$mcost^A(a,s_0-1)=mcost^A(a,s_0)=mcost^A(a,s_0+1)=cost^A(a,s_0),$$
  so the left and right side of (\ref{mcostA-diff-proof}) are both $0$.
  
  Therefore $mcost^A(a,s)$ is convex.  \qed
  \end{case}
\end{proof}

\end{document}